\documentclass{article}
\usepackage{fullpage}
\usepackage{lmodern}
\usepackage{microtype,xspace,multirow}
\usepackage{graphicx}
\usepackage{plaatjes}
\graphicspath{{./figures/}}
\usepackage{amsfonts,amsmath,amsthm,amssymb}
\usepackage{color}
\usepackage[hidelinks]{hyperref}

\usepackage[backend=bibtex,style=numeric-comp,sorting=none,firstinits=true,doi=false,url=false,maxbibnames=99]{biblatex}
\addbibresource{spider.bib}

\AtEveryBibitem{
 \clearlist{address}
 \clearfield{date}
 \clearfield{eprint}
 \clearfield{isbn}
 \clearfield{issn}
 \clearlist{location}
 \clearfield{month}
 \clearfield{series}
 \clearfield{pages}

 \ifentrytype{book}{}{
  \clearlist{publisher}
  \clearname{editor}
 }
}



\let\emptyset\varnothing

\def\nicefrac#1#2{
	\raise.5ex\hbox{$#1$}%
	\kern-.1em/\kern-.15em%
	\lower.25ex\hbox{$#2$}}

\thinmuskip=2mu minus 1mu
\medmuskip=3mu minus 2mu
\thickmuskip=4mu minus 2mu

\newcommand{\Sp}{\mathcal{S}}
\newcommand{\R}{\mathbb{R}}
\newcommand{\Li}{\mathcal{L}}
\newcommand\e\emph
\newcommand\eps{\ensuremath{\varepsilon}\xspace}
\newcommand\s{\ensuremath{s}\xspace}
\renewcommand\t{\ensuremath{t}\xspace}
\newcommand\T{\ensuremath{\mathbb{T}}\xspace}
\newcommand{\opt}{\ensuremath{\mathsf{opt}}\xspace}
\newcommand{\apx}{\ensuremath{\mathsf{apx}}\xspace}
\newcommand{\apxe}{\ensuremath{\apx_e}\xspace}

\newcommand\dr{\textsf{DiffuseReflection}\xspace}
\newcommand\drs{\textsf{DiffRefl}\xspace}
\newcommand\mlp[2]{\textsf{MinLinkPath}\ensuremath{_{#1,#2}}\xspace}
\newcommand\ske[1]{\ensuremath{D|^{#1}}\xspace}

\newtheorem{theorem}{Theorem}
\newtheorem{lemma}[theorem]{Lemma}
\newtheorem{corollary}[theorem]{Corollary}
\newtheorem{proposition}{Proposition}
\newtheorem{definition}{Definition}

\newtheorem{remark}{Remark}

\title{On the complexity of minimum-link path problems\footnote{An abridged version of this paper appeared in the proceedings of the 32nd International Symposium on Computational Geometry in 2016.}}

\author{
Irina Kostitsyna\thanks{Dept. of Mathematics and Computer Science, TU Eindhoven, Eindhoven, the Netherlands}
\and
Maarten L\"offler\thanks{Dept. of Computing and Information Sciences, Utrecht University, Utrecht, the Netherlands}
\and
Valentin Polishchuk\thanks{Communications and Transport Systems, ITN, Link\"oping University, Sweden}
\and
Frank Staals\thanks{MADALGO, Aarhus University, Aarhus, Denmark}
}

\date{}

\newcommand{\mkmcal}[1]{\ensuremath{\mathcal{#1}}\xspace}

\newcommand{\E}{\mkmcal{E}}

\newcommand{\myremark}[3]{\textcolor{blue}{\textsc{#1 #2:}} \textcolor{red}{\textsf{#3}}}
\renewcommand{\myremark}[3]{}

\newcommand{\frank}[2][says]{\myremark{Frank}{#1}{#2}}
\newcommand{\irina}[2][says]{\myremark{Irina}{#1}{#2}}
\newcommand{\maarten}[2][says]{\myremark{Maarten}{#1}{#2}}
\newcommand{\val}[2][says]{\myremark{Val}{#1}{#2}}

\newcommand{\old}[1]{}

\begin{document}


\maketitle

\begin{abstract}
We revisit the minimum-link path problem: Given a polyhedral domain and two points in it, connect the points by a polygonal path with minimum number of edges. We consider settings where the vertices and/or the edges of the path are restricted to lie on the boundary of the domain, or can be in its interior.
Our results include bit complexity bounds, a novel general hardness construction, and a polynomial-time approximation scheme.
We fully characterize the situation in 2 dimensions, and provide first results in dimensions 3 and higher for several variants of the problem.

  Concretely, our results resolve several open problems.  We prove that
  computing the minimum-link \e{diffuse reflection path}, motivated by ray
  tracing in computer graphics, is NP-hard, even for two-dimensional polygonal
  domains with holes.  This has remained an open problem~\cite{diffuse} despite
  a large body of work on the topic. We also resolve the open problem from~\cite{mrw} mentioned
  in the handbook~\cite{handbook04} (see Chapter~27.5, Open problem~3) and The
  Open Problems Project~\cite{topp} (see Problem~22): ``What is the complexity
  of the minimum-link path problem in 3-space?'' Our results imply that the
  problem is NP-hard even on terrains (and hence, due to discreteness of the answer, there is no
  FPTAS unless P=NP), but admits a PTAS.
\end{abstract}


\clearpage

\tableofcontents

\clearpage

\section{Introduction}
\label{sec:introduction}

The minimum-link path problem is fundamental in computational geometry~\cite{suri,ghosh,homot,kahan,mrw,revisited,n9,Guibas:1991:APS:648003.743125}.
It concerns the following question: given a polyhedral domain $D$ and two points $s$ and $t$ in $D$, what is the polygonal path connecting $s$ to $t$ that lies in $D$ and has as few links as possible?

In this paper, we revisit the problem in a general setting which encompasses several specific variants that have been considered in the literature.
First, we nuance and tighten results on the bit complexity involved in optimal minimum-link paths. Second, we present and apply a novel generic NP-hardness construction. Third, we extend a simple polynomial-time approximation scheme.

Concretely, our results resolve several open problems.
We prove that computing the minimum-link \e{diffuse reflection path} in polygons with holes~\cite{diffuse} is NP-hard, and we prove that the minimum-link path problem in 3-space~\cite{handbook04} (Chapter~27.5, Open problem~3) is NP-hard (even for terrains). In both cases, there is no FPTAS unless P=NP, but there is a PTAS.

We use terms \e{links} and \e{bends} for edges and vertices of the path, saving the terms \e{edges} and \e{vertices} for those of the domain
(also historically, minimum-link paths used to be called \e{minimum-bend} \cite{bends1,bends2,bends3}).

\subsection {Problem Statement, Domains and Constraints}
\label{sec:envs}

Due to their diverse applications, many different variants of minimum-link paths have been considered in the literature.
These variants can be categorized by two aspects.
Firstly, the \emph {domain} can take very different forms.
We select several common domains, ranging from a simple polygon in 2D to complex scenes in full 3D or even in higher dimensions.
Secondly, the links and bends of the solution paths are sometimes \emph {constrained} to lie on the boundary of the domain, or bends may be restricted to vertices or edges of the domain.
We now survey these settings in more detail.

\paragraph {Problem Statement}

Let $D$ be a closed connected $d$-dimensional polyhedral domain. For $0 \le a \le d$ we denote by \ske{a} the $a$-skeleton of $D$; that is, its $a$-dimensional subcomplex. For instance, \ske{d-1} is the boundary of $D$; \ske0 is the set of vertices of $D$. Note that \ske{a} is not necessarily connected.

\begin {definition}
We define $\mlp{a}{b}(D, s, t)$, for $0 \le a \le b \le d$ and $1 \le b$,
to be the problem of finding a minimum-link polygonal path in $D$
between two given points $s$ and $t$, where the bends of the solution (and $s$ and $t$) are restricted to lie in \ske a and the links of the solution are restricted to lie in \ske b.
\end {definition}
Fig.~\ref {fig:model-example} illustrates several instances of the problem in different domains.
\eenplaatje {model-example} {Left: \mlp22 in a polygon with holes. Middle: \mlp12 on a polyhedron. Right: \mlp03 on a polyhedral terrain.}

\paragraph {Domains}

We recap the various settings that have been singled out for studies in computational geometry.
We remark that we will not survey the rich field of path planning in rectilinear, or more generally, $C$-oriented worlds \cite{aedegeest}; all our paths will be assumed to be unrestricted in terms of orientations of their links.

One classical distinction between working setups in 2D is \e{simple polygons}
vs.\ \e{polygonal domains}. The former are a special case of the latter: simple
polygons are domains without holes. Many problems admit more efficient
solutions in simple polygons---loosely speaking, the golden standard is running
time of $O(n)$ for simple polygons and of $O(n\log n)$ for polygonal domains of
complexity $n$. This is the case, e.g., for the shortest path problem
\cite{ghlst,hs}. For minimum-link paths,
$O(n)$-time algorithms are known for simple polygons 
\cite{suri,ghosh,homot}, but for polygonal domains with holes the fastest known algorithm runs
in nearly quadratic time \cite{mrw}, which may be close to optimal due to
3SUM-hardness of the problem \cite{revisited}.
Even more striking is the
difference in the watchman route problem (find a shortest path to see all of
the domain), which combines path planning with visibility: in simple polygons
the optimal route can be found in polynomial time \cite{touring,bengt} while
for domains with holes the problem cannot be approximated to within a
logarithmic factor unless P=NP \cite{watchman}. Finding minimum-link watchman
route is NP-hard even for simple polygons \cite{alsuwaiyel}.\frank{I think we
  can drop the part about the watchman problem. Yes, it is somewhat related,
  but (i) I think it is generally accepted that min-link path is interesting,
  (ii) there are already plenty of references on that, (iii) we already have
  more than two pages of references..... } \maarten {Let's delete it after branching the SoCG version and keep it for the full version.}

In 3D, a \e{terrain} is a polyhedral surface (often restricted to a bounded region in the $xy$-projection) that is intersected only once by any vertical line. Terrains are traditionally studied in GIS applications and are ubiquitous in computational geometry~\cite {removal,new3d}.\val[wonders]{why (these) citations? remove them?}
Minimum-link paths are closely related to visibility problems, which have been studied extensively on terrains~\cite{terrainsurvey,hierarchy,partgard,apxvis,simplification,multiple}.
One step up from terrains, we may consider \e{simple} polyhedra (surfaces of genus $0$), or \e{full 3D} scenes.
Visibility has been studied in full 3D as well~\cite {moet2008,silhouette,faceguarding}.
To our knowledge, minimum-link paths in higher dimensions have not been studied before (with the exception of~\cite{recworlds} that considered rectilinear paths).

\paragraph {Constraints}

In path planning on polyhedral surfaces or terrains, it is standard to restrict paths to the terrain.
Minimum-link paths, on the other hand, have various geographic applications, ranging from feature simplification~\cite {Guibas:1991:APS:648003.743125} to visibility in terrains~\cite{terrainsurvey}.
In some of these applications, paths are allowed to live in free space, while bends are still restricted to the terrain.
In the GIS literature, out of simplicity and/or efficiency concerns, it is common to constrain bends even further to vertices of the domain (or, even more severely, the terrain itself may restrict vertices to grid points, as in the popular \emph{digital elevation map} (DEM) model).

In a vanilla min-link path problem the location of vertices (bends) of the path are unconstrained, i.e., they can occur anywhere in the free space. In the \e{diffuse reflection} model
\cite{diffuse,diffuseexp,constrained,aronov1,aronovk,n9}
 the bends are restricted to occur on the boundary of the domain. Studying this kind of paths is motivated by ray tracing in realistic rendering of 3D scenes in graphics, as light sources that can reach a pixel with fewer reflections make higher contributions to intensity of the pixel \cite{graphicsBook,removal}. Despite the 3D graphics motivation, all work on diffuse reflection has been confined to 2D polygonal domains, where the path bends are restricted to edges of the domain.

\subsection {Representation and Computation}

In computational geometry, the standard model of computation is the \emph {Real RAM},
which represents data as an infinite sequence of storage cells which can store any real number
or integer.
\maarten [notes for future reference] {Maybe uncomment more details for full version?}\irina{uncommented}
The model supports standard operations (such as addition, multiplication, or taking square-roots) in constant time.
The Real RAM is preferred for its elegance, but may not always be the best representation of physical computers.
For example, the \emph{floor} function is often allowed, which can be used to truncate a
real number to the nearest integer, but points at a flaw in the model:
if we were allowed to use it arbitrarily, the Real RAM could solve PSPACE-complete
problems in polynomial time~\cite{Schoenhage79}.
In contrast, the \emph {word RAM} stores a sequence of $w$-bit words, where $w \geq \log n$ (and $n$ is the problem size).
Data can be accessed arbitrarily, and standard
operations, such as Boolean operations
(\texttt{and}, \texttt{xor}, \texttt{shl}, $\ldots$), addition, or
multiplication take constant time. There are many variants of
the word RAM, depending on precisely which instructions are
supported in constant time. The general consensus seems
to be that any function in $\text{AC}^0$
is acceptable.\footnote{$\text{AC}^0$ is the
class of all functions $f: \{0,1\}^* \rightarrow \{0,1\}^*$ that
can be computed by a family of circuits $(C_n)_{n \in \mathbb{N}}$ with the
following properties: (i) each $C_n$ has $n$ inputs; (ii) there exist constants
$a,b$, such that $C_n$ has at most $an^b$ gates, for $n\in \mathbb{N}$;
(iii) there is a constant $d$ such that for all $n$ the length of the longest
path from an input to an output in $C_n$ is at most $d$ (i.e., the
circuit family has bounded depth); (iv) each gate
has an arbitrary number of incoming edges (i.e., the \emph{fan-in} is
unbounded).} However, it is always preferable to rely on a set of operations
as small, and as non-exotic, as possible.
Note that multiplication is not in $\text{AC}^0$~\cite{FurstSaSi84}.
Nevertheless, it is usually
included in the word RAM instruction set~\cite{FredmanWi94}.
The word RAM is much closer to reality, but complicates the analysis of geometric problems.

\val[edited this paragraph]{put in stuff about shortest paths, but removed a lot of other things (maybe shouldn't have removed all of them...)}In many cases, the difference is unimportant, as the real numbers involved in solving geometric problems are in fact algebraic numbers of low degree in a bounded domain, which can be described exactly with constantly many words.
Path planning is notoriously different in this respect. Indeed, in the Real RAM both the Euclidean shortest paths and the minimum-link paths in 2D can be found in optimal times. On the contrary, much less is known about the complexity of the problems in other models. For $L_2$-shortest paths the issue is that their length is represented by the sum of square roots and it is not known whether comparing the sum to a number can be done efficiently (if yes, one may hope that the difference between the models vanishes). Slightly more is known about minimum-link paths, for which the models are \e{provably} different: Snoeyink and Kahan~\cite{kahan} observed that the region of points reachable by $k$-link paths may have vertices needing $\Omega(k\log n)$ bits to describe. One of the results in this paper is the matching upper bound on the bit complexity of min-link paths.\val[hesitates]{Is speaking about results out of place here?}\irina{because it's one of the ``closing gaps'' results, I think it is appropriate mention I here} 


Relatedly, when studying the computational complexity of geometric problems, it is often not trivial to show a problem is in NP. Even if a potential solution can be verified in polynomial time, if such a solution requires real numbers that cannot be described succinctly, the set of solutions to try may be too large.
Recently, there has been some interest in computational geometry in showing problems are in NP~\cite{packinnp} (see also~\cite{string}).

A common practical approach to avoiding bit complexity issues is to approximate the problem by restricting solutions to use only vertices of the input. In minimum-link paths, this corresponds to \mlp0b. Although such paths can be computed efficiently,
a simple example (Appendix~\ref {ap:ve}) shows that the number of links in such a setting may be a linear factor higher than when considering geometric versions.

\subsection {Results}
\label{sec:results}
We give hardness results and approximation algorithms for various versions of the min-link path problem. Specifically,
\begin{itemize}
\item \val[suggests to rephrase]{In Section~\ref{sec:bit2d} we give an $\Omega(n \log n)$ lower bound on the bit complexity of some bends of min-link paths in 2D.}
In Section~\ref{sec:bit2d} we show a general lower bound on the bit complexity of min-link paths of $\Omega(n \log n)$ bits for some coordinates. (This was previously claimed, but not proven, by Snoeyink and Kahan~\cite{kahan}.) We show that the bound is tight in 2D
and we argue that this implies that \mlp{a}{2} is in NP.
In Section~\ref{sec:bit3d}, we argue that in 3D the boundary of the $k$-illuminated region can consist of $k$-th order algebraic curves, potentially leading to exponential bit complexity.
\item In Section~\ref{sub:blueprint} we present a blueprint for showing NP-hardness of minimum link problems. We apply it to prove NP-hardness of the diffuse reflection path problem (\mlp12) in 2D polygonal domains with holes in Section~\ref{sub:mlp12}.
In~Section~\ref{sec:hardness3d}, we use the same blueprint to prove that all non-trivial versions, defined above, of min-link problems in 3D are weakly NP-hard.
We also note that the min-link problems have no FPTAS and no additive approximation (unless P=NP).
\item In Section~\ref{sec:apxs2d} we extend the 2-approximation algorithm from \cite[Ch.~27.5]{handbook04}, based on computing weak visibility between sets of potential locations of the path's bends, to provide a simple PTAS for \mlp22, which we also adapt to \mlp12.
In Section~\ref{sec:apxs3d} we give simple constant-factor approximation algorithms for higher-dimensional minimum-link path versions, which can then be used in the same way to show that all versions admit PTASes.
\item In Section~\ref{sec:vis} we focus on \mlp23 (diffuse reflection in 3D) on terrains---the version that is most important in practice.\val{Misleading: we said diffrefl is important in graphics, but diffrefl in terrains is important for other reasons} We give a 2-approximation algorithm that runs faster than the generic algorithm from \cite[Ch.~27.5]{handbook04}. We also present an $O(n^4)$-size data structure encoding visibility between points on a terrain and argue that the size of the structure is asymptotically optimal.
\maarten {Don't forget to drop this bullet point if we drop the section for SoCG.}
\end{itemize}

\maarten {Cite \cite {faceguarding} in section on 3D complexity?}

\maarten {Mention somewhere that in 3D, when restricting vertices to grid points (low complexity points), we get arbirarily more links).}

Our results are charted and compared to existing results in Table~\ref {tab:results}.

\begin{table}[t]
\begin{tabular}[b]{l|p{2cm}|p{5cm}|p{5cm}}
\mlp{a}{b}
&$b=1$&$b=2$&$b=3$\\
\hline
$a=0$ & $O(n)$
      & $O(n^2)$
      & $O(n^2)$ \\
\hline
$a=1$ & $O(n)$
      & Simple Polygon: $O(n^9)$\cite{n9}\newline
        Full 2D: NP-hard$\star$\newline
        PTAS$\star$
      & NP-hard$\star$ (even in terrains)\newline
        PTAS$\star$
      \\
\hline
$a=2$ & N/A
      & Simple Polygon: $O(n)$\cite{suri}\newline
        Full 2D: $O(n^2\alpha(n)\log^2{n})$\cite{mrw}\newline
        PTAS$\star$
      & NP-hard$\star$ (even in terrains)\newline
        PTAS$\star$
      \\
\hline
$a=3$ & N/A
      & N/A
      & Terrains: $O(1)$\newline
        Full 3D: NP-hard$\star$\newline
        PTAS$\star$\\
\end{tabular}
\caption{Computational complexity of \mlp{a}{b} for $a \le b \le 3$. Results with citations are known, results marked with $\star$ are from this paper. Results without marks are trivial.}
\label{tab:results}
\end{table}
\val[finds]{$a=0$ version non-interesting (in the earlier draft it was ignored altogether, except for the small appendix)}
\maarten{I think it makes sense to include them even if it is non-interesting, since it is still part of the overall picture. However, I'm not sure the $O(|n^2)$ is optimal for \mlp02, and we haven't really though about it. So that might be a reason to leave out the row...}





\section {Algebraic Complexity in \texorpdfstring{$\R^2$}{R2}}
\label{sec:bit2d}

\subsection{Lower bound on the Bit complexity}
\label{sec:bit2d-low}

Snoeyink and Kahan~\cite {kahan} claim to ``give a simple instance in which
representing path vertices with rational coordinates requires $\Theta(n^2 \log
n)$ bits''. In fact, they show that the boundary of the region reachable from $s$ (a point with integer coordinates specified with $O(\log n)$ bits)
with $k$ links may have vertices whose coordinates have bit complexity $k \log n$. Note however, that this does not directly imply that a
minimum-link path from $s$ to another point \t with low-complexity (integer) coordinates must necessarily have such high-complexity bends (i.e., if \t itself is not a high-complexity vertex of a $k$-reachable region, one potentially could hope to avoid also placing the internal vertices of a min-link path to \t on such high-complexity points). Below we present a construction where the intermediate vertices must actually use $\Omega(k \log n)$ bits to be described, even if $s$ and $t$ can be specified using only $\log n$ bits each. We first prove this for the \mlp12 variant of the problem, and then extend our results to paths that may bend anywhere within the polygon, i.e. \mlp22.

\begin{lemma}\label{lem:bit_complexity_lower_bound_R2_edge}
There exists a simple polygon $P$, and points \s and \t in $P$ such that: (i) all the coordinates of the vertices of $P$ and of \s and \t can be represented using $O(\log n)$ bits, and (ii) any \s-\t min-link path that bends only on the edges of $P$ has vertices whose coordinates require $\Omega(k\log n)$ bits, where $k$ is the length of a min-link path between \s and \t.
\end{lemma}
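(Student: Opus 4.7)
The plan is to build a long thin corridor in which any min-link \s-\t path is forced to bounce between two nearly-parallel boundary edges, each bounce described by a small-integer-coefficient linear-fractional map of the boundary parameter. Iterating this map $k$ times, starting from the simple coordinate of \s, produces intermediate bends whose reduced rationals have denominators of magnitude $n^{\Theta(k)}$, requiring $\Omega(k\log n)$ bits to describe.

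First I would take $P$ to be a long thin rectangle, of unit height and length $O(n\log n)$, with $n$ small triangular ``pegs'' anchored alternately to the top and bottom edges at distinct integer $x$-coordinates chosen as the first $n$ primes (each of bit-size $O(\log n)$ by the prime number theorem). Each peg is thin and extends nearly all the way across the corridor, leaving only a short slit on the opposite side, so that every \s-\t path must weave between consecutive pegs and bend on the edge of $P$ to which each peg is anchored. I would place \s and \t at integer-coordinate points at the two ends of the corridor, positioned so that no topologically different route between them exists. This yields an \s-\t min-link path whose combinatorial type is unique, with $k=\Theta(n)$ links.

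Next I would analyze the algebraic recurrence between consecutive bends. When link $i$ connects $b_i$ on one wall to $b_{i+1}$ on the opposite wall while grazing the tip $p_i$ of the separating peg, the collinearity of $b_i$, $p_i$ and $b_{i+1}$ forces $b_{i+1}$ to be a linear-fractional function of $b_i$ whose coefficients are small integers involving the coordinates of $p_i$. By induction on $i$, if $b_1$ has a simple rational coordinate then $b_i = u_i/v_i$ in lowest terms, with $v_i$ divisible by $\prod_{j<i} p_j$; the pairwise coprimality of the peg $x$-coordinates rules out cancellation, so $v_k$ has $\Omega(k\log n)$ bits. Running the analogous recurrence backwards from \t shows that the intermediate bends must have high bit-complexity regardless of which endpoint is taken as the anchor of the induction.

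The main obstacle I expect is ruling out min-link paths of a \emph{different} combinatorial type that could bypass the forcing recurrence---for instance by skipping a peg at the cost of an extra link elsewhere, or by approaching \t through some other pocket of $P$. I plan to address this by making the pegs span all but a short, carefully placed sub-segment of each cross-section, so that any path skipping even one peg requires strictly more than $k$ links, and by choosing \s and \t so that the corridor is the only topological channel between them. Once uniqueness of the combinatorial routing is established, the algebraic recurrence applies verbatim to every min-link \s-\t path, yielding the claimed $\Omega(k\log n)$ bit-complexity lower bound.
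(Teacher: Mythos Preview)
Your proposal has a genuine gap: the linear-fractional recurrence you set up describes only the \emph{extremal} path---the one whose $i$-th link actually grazes the tip $p_i$ of the $i$-th peg. A generic min-link path does not graze. Once the combinatorial routing (which slits are traversed, in which order) is fixed, each bend $b_i$ is still free to lie anywhere in an interval $I_i$ on its wall, bounded by the two extremal rays through the neighbouring peg tips. The ``combinatorial uniqueness'' you argue for in your last paragraph therefore does not collapse these intervals to points, and your non-cancellation argument about prime denominators says nothing about an \emph{arbitrary} point of $I_i$; it only controls one endpoint. There is even an internal tension here: if the bends really were pinned (all $I_i$ singletons), then $t$ would be the image of $s$ under $k$ iterations of your map and would itself need $\Omega(k\log n)$ bits, contradicting~(i).

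What the statement actually requires is a bound on the \emph{length} of some $I_i$: if $|I_{k/2}| < 2^{-\Omega(k\log n)}$ then $I_{k/2}$ contains at most one rational of bit-size $o(k\log n)$, and one can then shift $s$ and $t$ slightly (keeping $O(\log n)$-bit coordinates) so that this single low-complexity point is excluded. This is precisely the route the paper takes: the polygon is built so that projecting $I_i$ through a reflex vertex at distance $\Theta(n)$ on either side shrinks it by a factor $\Theta(n^2)$, whence $|I_{k/2}| = n^{-\Theta(k)}$ starting from unit-length $I_0$ and $I_k$; a final adjustment of $s$ and $t$ within sub-intervals of $I_0,I_k$ of length $\Theta(1)$ avoids the one possible bad point. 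Your peg-and-slit corridor could in principle be tuned to shrink intervals as well, but you would have to control the slit geometry quantitatively and replace the prime/coprimality argument by this measure-theoretic one---the primes themselves do no work once you see that the bends are not pinned.
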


\begin{proof}
  We will refer to numbers with $O(\log n)$ bits as \emph{low-complexity}.

  \eenplaatje {bits-spiral} {(a) A spiral, as used in the construction by Kahan
    and Snoeyink. It uses integer coordinates with $O(\log n)$ bits. (b) The
    general idea. \frank{we may want to show the restricted intervals $J_i$ in
      such a fig as well.}}

  The general idea in our construction is as follows. We start with a
  low-complexity point $s' = b_0$ on an edge $e_0$ of the polygon. We then
  consider the furthest point $b_{i+1}$ on the boundary of $P$ that is
  reachable from $b_i$. More specifically, we require that any point on the
  boundary of $P$ between $s'$ and $b_i$ is reachable by a path of at most $i$
  links. We will obtain $b_{i+1}$ by projecting $b_i$ through a vertex
  $c_i$. Each such a step will increase the required number of bits for
  $b_{i+1}$ by $\Theta(\log n)$. Eventually, this yields a point $b_k$ on edge
  $e_k$. Let $t'$ be the $k$-reachable point on $e_k$ closest to $b_k$ that has
  low complexity. Since all points along the boundary from $s'$ to $b_k$ are
  reachable, and the vertices of $P$ have low complexity, such a point is
  guaranteed to exist. We set $a_k = t'$ and project $a_i$ through $c_{i-1}$ to
  $a_{i-1}$ to give us the furthest point (from $t'$) reachable by $k-i$
  links. See Fig.~\ref{fig:bits-spiral} for an illustration.

  The points in the interval $I_i = [a_i,b_i]$, with $1 \leq i < k$, are
  reachable from $s'$ by exactly $i$ links, and reachable from $t'$ by exactly
  $k-i$ links. So, to get from $s'$ to $t'$ with $k$ links, we need to choose
  the $i^\textrm{th}$ bend of the path to be within the interval
  $[a_i,b_i]$. By construction, the intervals for $i$ close to one or close to
  $k$ must contain low-complexity points. We now argue that we can build the
  construction in such a way that $I_{k/2}$ contains no low-complexity
  points.

  Observe that, if an interval contains no points that can be described with
  fewer than $m$ bits, its length can be at most $2^{-m}$. So, we have to show
  that $I_{k/2}$ has length at most $2^{-k\log n}$. 

  \eenplaatje {bits-shrink} {The interval $I_i$ of length $w_i$ produces an
    interval $I_{i+1}$ of length at most $w_{i+1} = h_i/\Theta(n) =
    \Theta(w_i/n^2)$, where $h_i = w_i/(w_i + \Theta(n))$. When the
    $i^{\mathrm{th}}$ link can be anywhere in region $R_i$ (shown in yellow),
    it follows that $R_i$ has height at most $h_i$, and width at most $w_i$.}

  By construction, the interval $I_k$ has length at most one. Similarly, the
  length of $I_0$ can be chosen to be at most one (if it is larger, we can
  adjust $s'=b_0$ to be the closest integer point to $a_0$). Now observe that
  that in every step, we can reduce the length $w_i$ of the interval $I_i$ by a
  factor $\Theta(n^2)$, using a construction like in
  Fig.~\ref{fig:bits-shrink}. Our overall construction is then shown in
  Fig.~\ref{fig:bits-cherry}.

  \eenplaatje {bits-cherry} {An overview of our polygon $P$ and the
    minimum-link path that has high-complexity coordinates.}

  It follows that $I_{k/2}$ cannot contain two low-complexity points that are
  close to each other. Note however, that it may still contain one such a
  point. However, it is easy to see that there is a sub-interval $J_{k/2} =
  [\ell_{k/2},r_{k/2}] \subseteq I_{k/2}$ of length $w_{k/2}/2$ that contains
  no points with fewer than $k \log n$ bits. By choosing $J_{k/2}$ we have
  restricted the interval that must contain the $(k/2)^{\mathrm{th}}$
  bend. This also restricts the possible positions for the $i^\mathrm{th}$ bend
  to an interval $J_i \subseteq I_i$. We find these intervals by projecting
  $\ell_{k/2}$ and $r_{k/2}$ through the vertices of $P$. Note that $s'$ and
  $t'$ may not be contained in $J_0$ and $J_k$, respectively, so we pick a new
  start point $s \in J_0$ and en point $t \in J_k$ as follows. Let $m_{k/2}$ be
  the mid point of $J_{k/2}$ and project $m_i$ through the vertices of $P$. Now
  choose $s$ to be a low-complexity point in the interval $[m_0,r_0]$, and $t$
  to be a low-complexity point in the interval $[\ell_k,m_k]$. Observe that
  $[m_0,r_0]$ and $[\ell_k,m_k]$ have length $\Theta(1)$---as
  $[\ell_{k/2},m_{k/2}]$ and $[m,_{k/2},r_{k/2}]$ have length $w_{k/2}/4$---and
  thus contain low complexity points. Furthermore, observe that $t$ is indeed
  reachable from $s$ by a path with $k-1$ bends (and thus $k$ links), all of
  which much lie in the intervals $J_i$, $1 \leq i < k$. For example using the
  path that uses all points $m_i$. Thus, we have that $t$ is reachable from $s$
  by a minimum-link path of $k$ links, and we need $\Omega(k\log n)$ bits to
  describe the coordinates of the vertices in such a path.
\end{proof}

\begin{lemma}\label{lem:bit_complexity_lower_bound_R2}
There exists a simple polygon $P$, and points \s and \t in $P$ such that: (i) all the coordinates of the vertices of $P$ and of \s and \t can be represented using $O(\log n)$ bits, and (ii) any \s-\t min-link path has vertices whose coordinates require $\Omega(k\log n)$ bits, where $k$ is the length of a min-link path between \s and \t.
\end{lemma}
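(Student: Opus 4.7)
The plan is to build on Lemma~\ref{lem:bit_complexity_lower_bound_R2_edge} by taking its spiral polygon and pinching it into a corridor-shaped simple polygon $P'$, in which the extra interior freedom of \mlp{2}{2} paths is too restricted to avoid high-complexity bends. All vertex coordinates of $P'$ will still be $O(\log n)$ bits, so the real work is in bounding the size of the 2D feasible region for each bend and showing that it contains no low-complexity point.

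First, I would define $P'$ by offsetting each edge $e_i$ of the spiral inward by a small low-complexity amount $\delta$, producing a corridor of width $\delta$ along each arm of the spiral while preserving the projecting vertices $c_j$ as corners. The parameter $\delta$ is chosen so that $\delta \ll 2^{-Ck\log n}$ for a sufficiently large constant $C$, but large enough to be described in $O(\log n)$ bits after rescaling to an $n^{O(1)}$-sized integer grid. Next, I would analyze the set $F_i$ of feasible positions for the $i$-th bend of any $k$-link path in $P'$: it is the intersection of the $i$-reachable region from \s and the $(k-i)$-reachable region from \t. Because the projecting vertices are unchanged, each reachability boundary in $P'$ is an $O(\delta)$-perturbation of the corresponding boundary in the original spiral $P$, so $F_i$ is contained in a $\delta$-neighborhood of the sub-interval $J_i \subseteq e_i$ from the proof of Lemma~\ref{lem:bit_complexity_lower_bound_R2_edge}. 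In particular, $F_i$ lies inside a sliver of length at most $|J_i|+O(\delta)$ along $e_i$ and width at most $\delta$ perpendicular to $e_i$. I would also verify that no $(k{-}1)$-link path exists in $P'$: any shortcut would have to cut across non-adjacent arms of the spiral, but the corridor widths are too small to permit such visibility, so the link-count argument of the edge-only lemma transfers verbatim.

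Finally, I would show that for at least one index $i$ (say $i=k/2$, as in the previous proof) no low-complexity point lies in $F_i$. Low-complexity points form a planar grid $G$ of spacing $\geq 2^{-O(\log n)}$, and by orienting the spiral so that each $e_i$ has a generic low-complexity slope $p/q$ with $q=\Theta(n)$, the perpendicular distance from any point of $G$ to the supporting line of $e_i$ is at least $n^{-O(1)}$. Choosing $\delta$ well below this bound and translating the construction by an $O(\log n)$-bit shift to offset the corridor centerline from the grid, the sliver $F_i$ contains no low-complexity point, so the $i$-th bend of any $k$-link \s-\t path must have a coordinate of bit complexity $\Omega(k\log n)$. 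The main obstacle is the middle step: confirming that $F_i$ really is confined to the claimed thin sliver requires a careful visibility analysis of the thickened spiral, and ruling out shortcut paths requires checking that the geometric separation between distinct corridors of $P'$ is much larger than the corridor width $\delta$.
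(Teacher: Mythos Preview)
Your approach has a genuine gap at the very first step: you cannot choose $\delta$ to be both $\ll 2^{-Ck\log n}$ and representable with $O(\log n)$ bits. If all vertex coordinates of $P'$ use $O(\log n)$ bits, then any two distinct parallel edges of $P'$ are at distance at least $n^{-O(1)}$, so the corridor width $\delta$ is at least $n^{-O(1)}$. Since $k$ can be $\Theta(n)$, the requirement $\delta \ll 2^{-Ck\log n}$ forces $\delta$ to need $\Omega(k\log n)$ bits relative to the polygon's scale; rescaling does not help, because it multiplies every coordinate by the same factor and the \emph{ratio} between $\delta$ and the polygon diameter still needs $\Omega(k\log n)$ bits to express. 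Consequently your sliver $F_i$ has perpendicular width only $n^{-O(1)}$, which is far too large to exclude all points of bit complexity $o(k\log n)$; the later ``generic slope'' argument cannot rescue this, since points with up to $m$ bits are $2^{-\Theta(m)}$-dense regardless of the slope of $e_i$.

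The paper's proof avoids modifying the polygon altogether. The key observation you are missing is that in the \emph{original} spiral the two-dimensional feasible region $R_i$ for the $i$th bend is already tiny: it is a triangle incident to the interval $I_i$, of width at most $w_i$ and height at most $h_i = w_i/(w_i+\Theta(n))$, because the boundaries of the $i$-reachable region from $s$ and of the $(k-i)$-reachable region from $t$ are rays through the same reflex vertex $c_{i-1}$ (respectively $c_i$) at distance $\Theta(n)$. Thus $R_{k/2}$ has area $O(w_{k/2}^2/n)$, which by the same recursion as in Lemma~\ref{lem:bit_complexity_lower_bound_R2_edge} is at most $2^{-\Omega(k\log n)}$, so it contains at most one low-complexity point; one then picks a sub-region avoiding that point and adjusts $s,t$ exactly as before. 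No thinning, no rescaling, and no new coordinates are introduced.
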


\begin{proof}
  We extend the construction from
  Lemma~\ref{lem:bit_complexity_lower_bound_R2_edge} to the case in which the
  bends may also lie in the interior of $P$. Let $B_i$ denote the region in $P$
  that is reachable from $s'$ by exactly $i$ links, let $A_i$ the region
  reachable from $t'$ by exactly $k-i$ links, and let $R_i = B_i \cap A_i$. To
  get from $s'$ to $t'$ with $k$ links, the $i^\textrm{th}$ bend has to lie in
  $R_i$. Now observe that this region is triangular, and incident to the
  interval $I_i$ (see e.g. Fig.~\ref{fig:bits-shrink} for an
  illustration). This region $R_i$ has width at most $w_i$ and height at most
  $h_i = w_i/(w_i + \Theta(n))$. Therefore, we can again argue that $R_{k/2}$
  is small, and thus contains at most one low-complexity point $p$. We then
  again choose a region $R'_{k/2} \subseteq R_{k/2}$ of diameter $w_{k/2}/2$
  that avoids point $p$. The remainder of the argument is analogous to the one
  before; we can pick points $s$ and $t$ in the restricted regions $R'_0$ and
  $R'_k$ that are reachable by a minimum-link path of $k-1$ bends, all of which
  have to lie in the regions $R'_i$. It follows that we again need
  $\Omega(k\log n)$ bits to describe the coordinates of the vertices in such a
  path.
\end{proof}

\subsection{Upper bound on the Bit complexity}
\label{sec:bit2d-up}

We now show that the bound of Snoeyink and Kahan~\cite{kahan} on the complexity of $k$-link reachable regions is tight: representing the regions $\mathcal{R}$ as
polygons with rational coordinates requires $O(n^2 \log n)$ for any polygon
$P$, assuming that representation of the coordinates of any vertex of $P$
requires at most $c_0\log n$ bits for some constant $c_0$. Thus, we have a
matching lower and upper bound on the bit complexity of a minimum-link path in
$\R^2$.

Consider a simple polygon $P$ with $n$ vertices, and a point $s\in P$. Analogous to \cite{kahan}, define a sequence of regions $\mathcal{R}=\{R_1,R_2,R_3,\dots\}$, where $R_1$ is a set of all points in $P$ that see $s$, and $R_{i+1}$ is a region of points in $P$ that see some point in $R_i$ for $i\geq 1$. In other words, region $R_{i+1}$ consists of all the points of $P$ that are illuminated by region $R_i$.

\paragraph{Construction of region $R_{i+1}$.} If $P$ is a simple polygon, then
$R_{i+1}$ is also a simple polygon, consisting of $O(n)$ vertices. We will
bound the bit complexity of a single vertex of $R_{i+1}$. The vertices of such
a region are either
\begin{itemize}
	\item original vertices of $P$,
	\item intersection points of $P$'s boundary with lines going through reflex
      vertices of $P$, or
	\item intersection points of $P$'s boundary with rays emanating from the vertices of $R_i$ and going through reflex vertices of $P$.
\frank{I don't see how these types differ from the type 2 ones? (or at least, if you define $R_0 = \{s\}$ then I think they should be the same, right?} \irina{the bit complexity will be different}
\end{itemize}

Only the last type of vertices can lead to an increase in bit
complexity. 
Each of these vertices is defined as an intersection point of two lines: one of
the lines passes through two vertices of $P$, say $a=(x_a,y_a)$ and
$b=(x_b,y_b)$, and, therefore, has a $O(\log n)$ bit representation. The other
line passes through one vertex of $P$, say $c=(x_c,y_c)$, with coordinates of
$O(\log n)$ bit complexity, and one vertex of region $R_i$, say $d=(x_d,y_d)$,
with coordinates of potentially higher complexity. The coordinates of the
intersection can then be calculated by the following formula:
\begin{equation}
\begin{pmatrix}
x^*\\
y^*
\end{pmatrix}=
\begin{pmatrix}
\dfrac{(x_b y_a-x_a y_b+x_a y_c-x_b y_c)x_d+(x_b x_c-x_a x_c)y_d+x_a y_b x_c-y_a x_b x_c}{(y_a-y_b)x_d-(x_a-x_b)y_d+x_a y_c-y_a x_c-x_b y_c+y_b x_c}\\[12pt]
\dfrac{(y_a y_c-y_b y_c)x_d + (x_b y_a-x_c y_a-x_a y_b+x_c y_b) y_d+x_a y_b y_c-y_a x_b y_c}{(y_a-y_b)x_d-(x_a-x_b)y_d+x_a y_c-y_a x_c-x_b y_c+y_b x_c}
\end{pmatrix}=
\begin{pmatrix}
\dfrac{A'_1 x_d+B'_1 y_d+C'_1}{E' x_d+F' y_d+G'}\\[12pt]
\dfrac{A'_2 x_d+B'_2 y_d+C'_2}{E' x_d+F' y_d+G'}
\end{pmatrix}\,.
\label{eq:intersection-point}
\end{equation}
Point $d$ lies on the boundary of $P$. Denote the end points of the side it belongs to as $u$ and $v$. Then the following relation between the coordinates of $d$ holds:
\[
y_d=\frac{(y_u-y_v)x_d+x_u y_v-y_u x_v}{x_u-x_v}\,.
\]
Thus, Equation~\ref{eq:intersection-point} can be rewritten as:
\begin{equation}
\begin{pmatrix}
x^*\\
y^*
\end{pmatrix}=
\begin{pmatrix}
\dfrac{A_1 x_d+B_1}{C x_d+D}\\[12pt]
\dfrac{A_2 x_d+B_2}{C x_d+D}
\end{pmatrix}\,,
\label{eq:intersection-point-new}
\end{equation}
where each of $A_1$, $A_2$, $B_1$, $B_2$, $C$, and $D$ has bit complexity not greater than $c \log n$ for some constant $c$ (here, it is enough to choose $c=4 c_0$). Let $x_d$ be represented as a rational number $\nicefrac{p}{q}$, where $p$ and $q$ are mutually prime integers. Then the number of bits required to represent $x_d$ is $sp(x_d)=\lceil\log (p+1)\rceil+\lceil\log (q+1)\rceil \ge \log (p+1)+\log (q+1) \ge 2 \log(p+q)$, the last inequality holds for all $p\ge 1$ and $q \ge 1$. Therefore, the number of bits required to represent $x^*$ is
\[
\begin{split}
sp(x^*)&=\lceil\log(A_1 p+B_1 q+1)\rceil+\lceil\log(C p+D q+1)\rceil\leq 2 \lceil\log(E(p+q)+1)\rceil\leq\\
	&\leq 2 \log E+2\log(p+q)+2\leq 2+2c \log n+sp(x_d)\,,
\end{split}
\]
where $E=\max\{A_1,B_1,C,D\}$. Analogously for $y^*$, $sp(y^*)\leq2+2c \log n+sp(x_d)$. Therefore, at every step, the bit complexity of the coordinates grows no more than by an additive value $2+2c \log n$. After $n$ steps, the bit-complexity of the regions' vertices is $O(n\log n)$.

\begin{theorem} \label {thm:bit-up-regions}
Representing the regions $\mathcal{R}$ as polygons with rational coordinates requires $O(n^2 \log n)$ bits.
\end{theorem}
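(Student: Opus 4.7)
The plan is to combine the two bounds that the preceding discussion has already set up: the combinatorial complexity of each region $R_i$, and the bit complexity of each of its vertices.

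For the combinatorial bound, I would first observe that by the three-way case analysis carried out above (original vertices of $P$, intersections of $P$'s boundary with lines through reflex vertices, and intersections with rays from vertices of $R_i$ through reflex vertices), and since $P$ is a simple polygon with $n$ vertices, each region $R_{i+1}$ is itself a simple polygon of combinatorial complexity $O(n)$. This is a standard consequence of the weak visibility structure in a simple polygon: the boundary of $R_{i+1}$ alternates between portions of $\partial P$ and straight segments created by reflex vertices, and the total number of such pieces is linear in $n$.

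For the per-vertex bit-complexity bound, the recursion~(\ref{eq:intersection-point-new}) writes a coordinate of a new vertex of $R_{i+1}$ as a M\"obius-style rational function $(A_1 x_d + B_1)/(C x_d + D)$ of the coordinate $x_d$ of a parent vertex of $R_i$, whose constants have bit length $O(\log n)$; the computation immediately after~(\ref{eq:intersection-point-new}) shows that substituting $x_d = p/q$ increases the representation size by an additive $2 + 2c\log n$. Since any minimum-link path in $\R^2$ uses $O(n)$ links, only the first $O(n)$ regions of $\mathcal{R}$ are of interest, and iterating the additive bound $O(n)$ times gives each vertex a representation of $O(n \log n)$ bits.

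Multiplying the $O(n)$ vertices per region by the $O(n \log n)$ bits per vertex yields the claimed $O(n^2 \log n)$ bit bound. The main (and essentially only) subtle point is verifying that the per-step growth really is additive rather than multiplicative when we repeatedly plug $x_d = p/q$ into~(\ref{eq:intersection-point-new}); this amounts to checking that numerator and denominator inherit the $O(\log n)$-bit constants of the low-complexity line through two vertices of $P$, and is already established in the calculation preceding the theorem.
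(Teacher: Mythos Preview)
Your proposal is correct and follows essentially the same approach as the paper: the theorem has no separate proof in the paper but is stated as an immediate consequence of the preceding discussion, which establishes exactly the two ingredients you name---the $O(n)$ combinatorial complexity of each $R_i$ and the additive $O(\log n)$ per-step growth in vertex bit complexity, yielding $O(n\log n)$ bits per vertex after $n$ steps---and your write-up simply makes the final multiplication explicit.
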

\frank{should this really be a theorem? A Lemma seems more appropriate}

\begin {corollary} \label {cor:bit-up-paths}
If there exists a solution with $k$ links, there also exists one in which the coordinates of the bends use at most $O(k \log n)$ bits.
\end {corollary}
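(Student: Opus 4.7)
The plan is to construct a $k$-link path from $s$ to $t$ whose bends coincide with vertices of the reachable regions $R_1,\dots,R_{k-1}$ built in the proof of Theorem~\ref{thm:bit-up-regions}. Since each such vertex has bit complexity at most $O(k\log n)$ by that theorem, the corollary follows. The critical point is to \emph{avoid} defining later bends as intersections of visibility polygons of earlier bends with $R_i$: bit complexity compounds under such a construction, and a naive backward recursion would yield only $O(k^2\log n)$ bits per bend. Instead, each bend must be chosen from a set whose complexity is bounded \emph{intrinsically} by the structure of the regions $R_i$.

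The structural fact I plan to exploit is that every vertex $v$ of $R_i \setminus R_{i-1}$ arises as the intersection of an edge of $P$ with the ray from a reflex vertex $c$ of $P$ extending a line through some vertex $v'$ of $R_{i-1}$. By construction $v'$, $c$, and $v$ are collinear and the segment $v'v$ lies in $P$, so $v'$ sees $v$. This gives a natural ``parent pointer'' from each vertex of $R_i\setminus R_{i-1}$ to a vertex of $R_{i-1}$. Following these parent pointers backward from any vertex $w$ of $R_k$ produces a chain $s=v_0,v_1,\dots,v_k=w$ with $v_i$ a vertex of $R_i$ and with consecutive vertices mutually visible, hence a $k$-link path from $s$ to $w$ with bends of bit complexity $O(k\log n)$.

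The main obstacle is that $t$ is in general not a vertex of $R_k$, so the chain above ends at some $w\ne t$. I would handle this by choosing the last bend $v_{k-1}$ not by the parent-pointer rule, but directly as a vertex of $V(t)\cap R_{k-1}$ whose complexity is still $O(k\log n)$: since $t$ has complexity $O(\log n)$, every edge of $V(t)$ has complexity $O(\log n)$, so intersections with edges of $R_{k-1}$ (complexity $O((k-1)\log n)$) and vertices of $R_{k-1}$ themselves both give $O(k\log n)$ bits. Since $t\in R_k$, the region $V(t)\cap R_{k-1}$ is non-empty and has such a vertex. I then splice this $v_{k-1}$ into the chain by replacing the parent-pointer subchain ending at $v_{k-1}$'s own parent and continuing the backward pointer construction from there. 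The final path $s,v_1,\dots,v_{k-1},t$ is a $k$-link path with every bend of bit complexity at most $O(k\log n)$, as required.
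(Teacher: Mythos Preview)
The paper states this corollary without an explicit proof, treating it as an immediate consequence of the bit-growth bound proved for Theorem~\ref{thm:bit-up-regions} (each vertex of $R_i$ needs only $O(i\log n)$ bits). Your plan is a sensible attempt to make that inference rigorous, and the parent-pointer observation---that every ``new'' vertex of $R_i$ lies on a ray from a vertex $d\in R_{i-1}$ through a reflex vertex, hence sees $d$---is exactly the mechanism behind the paper's inductive complexity bound.

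The genuine gap is the splicing step. You pick $v_{k-1}$ as a vertex of $V(t)\cap R_{k-1}$ and then speak of ``$v_{k-1}$'s own parent,'' but $v_{k-1}$ need not be a vertex of $R_{k-1}$: it may be a vertex of $V(t)$ lying in the interior of $R_{k-1}$, or a proper crossing of an edge of $V(t)$ with an edge of $R_{k-1}$. In neither case does your parent-pointer rule apply, and if you simply recurse from such a $v_{k-1}$ (taking $v_{k-2}$ as a vertex of $V(v_{k-1})\cap R_{k-2}$), the edge--edge crossing case reintroduces precisely the compounding you set out to avoid.

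The repair stays within your framework but needs a case split. If $v_{k-1}$ lands on a \emph{window} edge $W$ of $R_{k-1}$ (a boundary edge not on $\partial P$), then by the paper's description of how $\partial R_{k-1}$ is built, $W$ lies on the line through a reflex vertex $c$ and a vertex $d$ of $R_{k-2}$, and $d$ sees all of $W$; so take $v_{k-2}=d$ and follow genuine parent pointers from there. If instead $v_{k-1}$ lies on $\partial P$ (this covers both a vertex of $V(t)$ and a crossing of a window edge of $V(t)$ with an edge of $P$), then $v_{k-1}$ is the projection of $v_k$ through a reflex vertex onto $\partial P$, so its complexity exceeds that of $v_k$ by only $O(\log n)$; recurse one level. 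Repeated recursion in this second branch adds only $O(\log n)$ per step, so even if you never hit the first branch you stay at $O(k\log n)$ bits per bend.
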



\begin{theorem} \label {2d-np}
\mlp{a}2 is in NP.
\end{theorem}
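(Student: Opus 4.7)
The plan is to exhibit, for the decision version of $\mlp{a}2$, a polynomial-size certificate together with a polynomial-time verifier. Concretely, the decision problem takes as input a 2D polygonal domain $D$ of complexity $n$, points $\s,\t\in D$ and an integer $k$, and asks whether some \s-\t path whose bends lie on \ske{a} and whose links lie in $D$ uses at most $k$ links. As a certificate I would simply use the sequence of at most $k+1$ bend coordinates describing a witnessing path.

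First I would argue that the certificate has polynomially many bits. The min-link distance between any two points in a polygonal domain of complexity $n$ is $O(n)$ (a path through the shortest-path tree, or any linear-size link-diameter bound, suffices), so we may assume $k=O(n)$; otherwise any \s-\t path with, say, $n$ links already certifies the ``yes'' instance. By Corollary~\ref{cor:bit-up-paths}, whenever a solution with $k$ links exists, one also exists in which each bend's coordinates use $O(k\log n)$ bits. The total certificate length is therefore $O(k^2\log n)=O(n^3)$ bits, which is polynomial in the input size. Strictly speaking Corollary~\ref{cor:bit-up-paths} is stated for simple polygons, so a secondary step is to extend the $k$-illuminated region argument of Section~\ref{sec:bit2d-up} to polygonal domains with holes: the recurrence on bit complexity only uses that a new vertex of a $k$-reachable region arises as the intersection of a low-complexity line through two domain vertices with a line through a domain vertex and a vertex of the previous region, which is unaffected by the presence of holes.

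Next I would describe the verifier. Given a candidate bend sequence $v_0=\s,v_1,\dots,v_{k'}=\t$ with $k'\le k$, the verifier performs four checks, each of which is a standard computational-geometry routine that runs in time polynomial in the total bit complexity of the input and the certificate: (i) verify $v_0=\s$, $v_{k'}=\t$ and $k'\le k$; (ii) verify that each $v_i$ lies on \ske{a} (for $a=0$ compare against the vertex list of $D$, for $a=1$ test incidence with some edge, for $a=2$ test point-in-polygonal-domain); (iii) verify that each link $\overline{v_i v_{i+1}}$ lies entirely in $D$, by testing the segment against every edge of $\partial D$ for improper crossings and testing one interior point of the segment for membership in $D$; (iv) verify $k'\le k$. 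All arithmetic is exact on rationals with $O(k\log n)$-bit numerators and denominators, so each test runs in polynomial time.

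The main obstacle is step (i) of the size bound: a priori the bends of an optimal path could be high-complexity algebraic points described only by a deeply nested sequence of line-line intersections, which would rule out NP membership. This is precisely the obstacle that Corollary~\ref{cor:bit-up-paths} removes, by pinning down a matching upper bound of $O(k\log n)$ bits per bend against the lower bound of Lemma~\ref{lem:bit_complexity_lower_bound_R2}. With that in hand, the remaining verifier is entirely routine, so $\mlp{a}2$ is in NP.
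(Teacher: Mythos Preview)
Your proposal is correct and follows essentially the same approach as the paper: use the bend sequence as a certificate, invoke Corollary~\ref{cor:bit-up-paths} to bound the bit complexity of each bend by $O(k\log n)$ with $k=O(n)$, and verify the path in polynomial time by checking that consecutive bends see each other. You are in fact more careful than the paper in two respects---you spell out the verifier and you flag that Section~\ref{sec:bit2d-up} treats simple polygons while $\mlp{a}2$ is stated for general 2D domains (and you correctly observe that the bit-growth recurrence is unaffected by holes).
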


\begin {proof}
  We need to show that a candidate solution can be verified in polynomial time.
  A potential solution needs at most $n$ links.
  By Corollary~\ref{cor:bit-up-paths}, we only need to verify candidate solutions that consist
  of bends with $O(n \log n)$-bit coordinates.
  Given such a candidate, we need to verify pairwise visibility between at most $n$ pairs of
  points with $O(n \log n)$-bit coordinates, which can be done in polynomial time.
\end {proof}


\section{Computational Complexity in \texorpdfstring{$\R^2$}{R2}}
\label{sec:hardness2d}

In this section we show that \mlp12 is NP-hard. To this end, we first provide a
blueprint for our reduction in Section~\ref{sub:blueprint}. 
In Section~\ref{sub:mlp12} we then show how to
``instantiate'' this blueprint for \mlp12 in a polygon with holes.

\subsection {A Blueprint for Hardness Reductions}
\label{sub:blueprint}

We reduce from the 2-Partition problem: Given a set of integers
$A=\{a_1,\dots,a_m\}$, find a subset $S\subseteq A$ whose sum is equal to half
the sum of all numbers. The main idea behind all the hardness reductions is as
follows. Consider a 2D construction in Fig.~\ref{fig:hard}~(left). Let point
$s$ have coordinates $(0,0)$, and $t$ (not in the figure) have coordinates
$(\sum a_i/2,4m-2)$. For now, in this construction, we will consider only paths
from $s$ to $t$ that are allowed to bend on horizontal lines with even
$y$-coordinates. Moreover, we will count an intersection with each such
horizontal line as a bend. We will place fences along the lines with odd
$y$-coordinates in such a way that an $s$-$t$ path with $2m-1$ links exists
(that bends only on horizontal lines with even $y$-coordinates) if and only if
there is a solution to the 2-Partition instance.

Call the set of horizontal lines $\ell_0:y=0$, $\ell_i:y=4i-2$ for $1\leq i \leq m$ \emph{important} (dashed lines in Fig.~\ref{fig:hard}), and the set of horizontal lines $\ell'_i:y=4i-4$ for $2\leq i \leq m$ \emph{intermediate} (dash-dotted lines in Fig.~\ref{fig:hard}). Each important line $\ell_i$ will ``encode'' the running sums of all subsets of the first $i$ integers $A_i=\{a_1,\dots,a_i\}$. That is, the set of points on $\ell_i$ that are reachable from $s$ with $2i-1$ links will have coordinates $(\sum_{a_j\in S_i}a_j, 4i-2)$ for all possible subsets $S_i\subseteq A_i$.

Call the set of horizontal lines $f_1:y=1$, $f_i:y=4i-5$ for $2\leq i \leq m$
\emph{multiplying}, and the set of horizontal lines $f'_i:y=4i-3$ for $2\leq i
\leq m$ \emph{reversing}. Each multiplying line $f_i$ contains a fence with two
$0$-width slits that we call $0$-slit and $a_i$-slit. The $0$-slit with
$x$-coordinate $0$ corresponds to not including integer $a_i$ into subset
$S_i$, and the $a_i$-slit with $x$-coordinate $\sum_1^{i}{a_j}-a_i/2$
corresponds to including $a_i$ into $S_i$. Each reversing line $f'_i$ contains
a fence with two $0$-width slits (reversing $0$-slit and reversing $a_i$-slit)
with $x$-coordinates $0$ and $\sum_1^i{a_j}$ that ``put in place'' the next
bends of potential min-link paths, i.e., into points on $\ell_i$ with
$x$-coordinates equal to running sums of $S_i$.  We add a vertical fence of
length $1$ between lines $\ell'_i$ and $f'_i$ at $x$-coordinate
$\sum_1^i{a_j}/2$ to prevent the min-link paths that went through the
multiplying $0$-slit from going through the reversing $a_i$-slit, and vice versa.

As an example, consider (important) line $\ell_2$ in Fig.~\ref{fig:hard}. The four points on $\ell_2$ that are reachable from $s$ with $3$ links have $x$-coordinates $\{0, a_1, a_2, a_1+a_2\}$. The points on line $\ell'_3$ that are reachable from $s$ with a path (with $4$ links) that goes through the $0$-slit on line $f_3$ have $x$-coordinates $\{0,-a_1,-a_2,-(a_1+a_2)\}$, and the points on $\ell'_3$ that are reachable from $s$ through the $a_3$-slit have $x$-coordinates $\{a_1\!+\!a_2\!+\!a_3,2a_1\!+\!a_2\!+\!a_3,a_1\!+\!2a_2\!+\!a_3,2a_1\!+\!2a_2+a_3\}$. The reversing $0$-slit on line $f'_3$ places the first four points on $\ell_3$ into $x$-coordinates $\{0, a_1, a_2, a_1+a_2\}$, and the reversing $a_3$-slit places the second four points on $\ell_3$ into $x$-coordinates $\{a_3, a_1+a_3, a_2+a_3, a_1+a_2+a_3\}$.

In general, consider some point $p$ on line $\ell_{i-1}$ that is reachable from
$s$ with $2i-3$ links. The two points on $\ell'_i$ that can be reached from $p$
with one link have $x$-coordinates $-p_x$ and $2\sum_1^{i}{a_j}-a_i-p_x$, where
$p_x$ is the $x$-coordinate of $p$. Consequently, the two points on $\ell_i$
that can be reached from $p$ with two links have $x$-coordinates $p_x$ and
$p_x+a_i$. Therefore, for every line $\ell_i$, the set of points on it that are
reachable from $s$ with a min-link path have $x$-coordinates equal to
$\sum_{a_j\in S_i}a_j$ for all possible subsets $S_i\subseteq A_i$. Consider
line $\ell_m$ and the destination point $t$ on it. There exists a $s$-$t$ path
with $2m-1$ links if and only if the $x$-coordinate of $t$ is equal to
$\sum_{a_j\in S}a_j$ for some $S\subseteq A$. The complexity of the
construction is polynomial in the size of the 2-Partition instance. Therefore,
finding a min-link path from $s$ to $t$ in our 2D construction is $NP$-hard.


\begin{figure}[t]
\centering
\includegraphics[page=2]{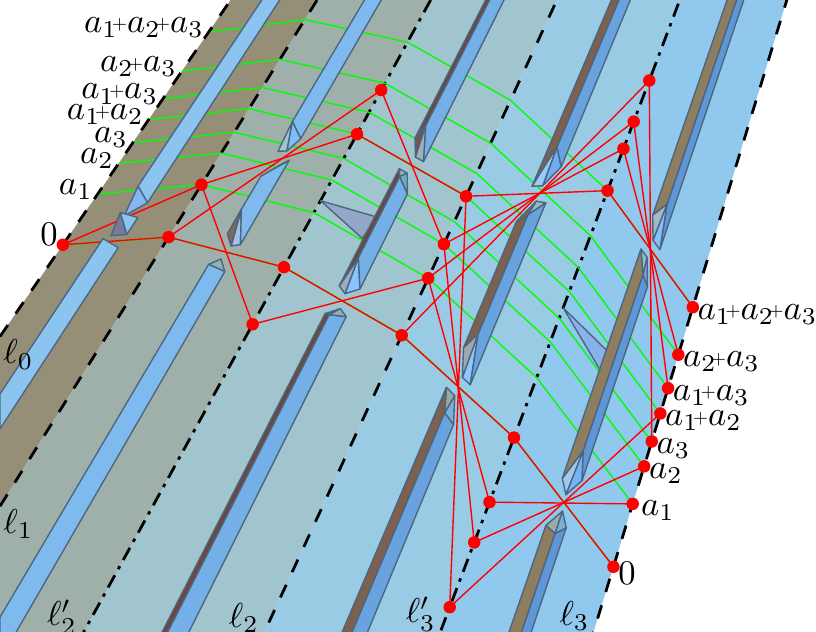}
\caption{The first few lines of a 2D construction depicting the general idea behind the hardness proofs: important lines $\ell_0$--$\ell_3$, intermediate lines $\ell'_1$--$\ell'_3$, multiplying lines $f_1$--$f_3$, and reversing lines $f'_1$--$f'_3$. The slits in the fences on multiplying and reversing lines are placed in such a way that the locations on $\ell_i$ that are reachable from $s$ with $2i-1$ links correspond to sums formed by all possible subsets of $\{a_1,\dots,a_i\}$.
}
\label{fig:hard}
\end{figure}

\subsection {Hardness of \texorpdfstring{\mlp12}{MinLinkPath1,2}}
\label{sub:mlp12}

We can turn our construction from Section~\ref{sub:blueprint} into a ``zigzag''
polygon (Fig.~\ref{fig:diffuse}); the fences are turned into obstacles within
the corresponding corridors, and slits remain slits---the only free space
through which it is possible to go with one link between the polygon edges that
correspond to consecutive lines $\ell'_i$ and $\ell_i$ (or $\ell_{i-1}$ and
$\ell'_i$). This retains the crucial property of 2D construction: locations
reachable with fewest links on the edges of the polygon correspond to sums of
numbers in the subsets of $A$. We conclude:

\begin{theorem}
  \label{thm:mlp12_2d}
  \mlp12 in a 2D polygonal domain with holes is NP-hard.
\end{theorem}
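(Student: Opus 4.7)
The plan is to realize the abstract blueprint of Section~\ref{sub:blueprint} as an actual polygonal domain with holes, in such a way that $(2m-1)$-link \mlp12 paths from $s$ to $t$ in the domain are in bijection with "subset sum" paths in the blueprint, and hence with solutions to the 2-Partition instance.

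First, I would build the outer boundary as a zigzag polygon. Along each important line $\ell_i$ I place a long horizontal edge of the outer boundary, alternating whether it faces upward or downward in consecutive layers, and connect adjacent important-line edges by short vertical walls that pinch the polygon shut at the lateral ends. The result is a sequence of "corridors" stacked vertically, and any $s$-$t$ path must cross each corridor in turn. Because \mlp12 requires bends to lie on polygon edges, and because, by making all other boundary edges short and far from the interesting $x$-range, the only edges usable as bend sites between consecutive corridors are the long horizontal edges lying on the important lines, the path is essentially forced to bend on important-line edges exactly as in the blueprint.

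Next, I would realize each fence of the blueprint as a thin rectangular hole. A fence on a multiplying or reversing line becomes a thin horizontal hole centered on that line, leaving two narrow gaps of width $\varepsilon$ at the prescribed $x$-coordinates to serve as the $0$-slit and $a_i$-slit; the vertical fence between $\ell'_i$ and $f'_i$ likewise becomes a thin vertical hole. For $\varepsilon$ polynomially small in the input, the set of points on $\ell_i$ reachable from $s$ by $2i-1$ links is, up to an arbitrarily small perturbation, exactly the set $\{\sum_{a_j\in S_i}a_j : S_i\subseteq A_i\}$ predicted by the blueprint, and the "intersection with a horizontal line" count of the blueprint matches the actual link count in the polygon.

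Finally, I would verify the reduction in both directions. Forward: given a valid 2-Partition $S$, the blueprint's $(2m-1)$-link path is realizable in the polygon, its bends fall on horizontal edges lying on the important lines, and it is a valid \mlp12 path. Reverse: any \mlp12 path of at most $2m-1$ links must cross each corridor, and the corridor/slit geometry forces it to use exactly two links per layer and to bend on important-line edges whose $x$-coordinates follow the blueprint's running-sum recurrence, so reaching $t=(\sum a_i/2,\,4m-2)$ forces the encoded subset to have sum $\sum a_i/2$. Since the polygon has size polynomial in $m$ and $\sum a_i$ (the blueprint is a polynomial-size object with low-bit-complexity coordinates), this is a polynomial-time reduction from 2-Partition, establishing weak NP-hardness. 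The main obstacle will be to argue this faithfulness robustly: I must choose $\varepsilon$ small enough, and position the zigzag pinches carefully enough, that no "cheating" path of length at most $2m-1$ can exploit the positive slit widths or bend on an unintended boundary edge, while keeping all coordinates of polynomial bit complexity so that the reduction remains polynomial.
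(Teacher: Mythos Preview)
Your approach is essentially the paper's: turn the blueprint into a zigzag polygon whose long horizontal boundary edges are the allowed bend sites, realize the fences as thin rectangular holes with narrow slits, and argue that a $(2m-1)$-link \mlp12 path exists iff the 2-Partition instance is feasible.

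One point needs correcting. You place horizontal boundary edges only along the \emph{important} lines $\ell_i$, but the blueprint sends each layer's path $\ell_{i-1}\to\ell'_i\to\ell_i$ through two slits, so the intermediate bend must land on the \emph{intermediate} line $\ell'_i$. In your construction as described, there is no polygon edge on $\ell'_i$, and the only $1$-skeleton available inside a corridor is the boundaries of the fence-holes themselves---which do not sit at the required $x$-coordinates. Hence the forward direction (a 2-Partition solution yields a $(2m-1)$-link \mlp12 path) fails as written. The fix is exactly what the paper does: crease the zigzag at \emph{both} important and intermediate lines, so that the polygon has a long horizontal edge on every $\ell_i$ and every $\ell'_i$; the paper explicitly speaks of ``polygon edges that correspond to consecutive lines $\ell'_i$ and $\ell_i$ (or $\ell_{i-1}$ and $\ell'_i$).'' With that adjustment your argument goes through, and your discussion of positive slit width and coordinate bit-complexity is already more careful than the paper's one-paragraph proof.
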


\begin{figure}[t]
\centering
\includegraphics{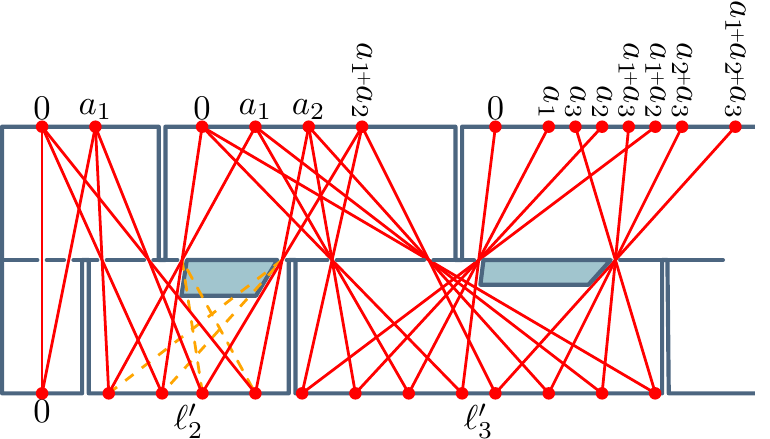}
\caption{There exists an \s-\t diffuse reflection path with $2m-1$ links iff 2-Partition instance is feasible.}
\label{fig:diffuse}
\end{figure}

Overall our reduction bears resemblance to the classical \e{path encoding} scheme \cite{3dhard} used to prove hardness of 3D shortest path and other path planning problems, as we also repeatedly double the number of path homotopy types; however, since we reduce from 2-Partition (and not from 3SAT, as is common with path encoding), our proof(s) are much less involved than a typical path-encoding one.

\paragraph{No FPTAS.} Obviously, problems with a discrete
answer (in which a second-best solution is separated by at least 1 from the
optimum) have no FPTAS. \frank{if this is so obvious, we probably should not
  spend 15 lines on it...}For example, in the reduction in
Theorem~\ref{thm:mlp12_2d}, if the instance of 2-Partition is feasible, the
optimal path has $2m-1$ links; otherwise it has $2m$ links. Suppose there
exists an algorithm, which, for any $\eps>0$ finds a $(1+\eps)$-approximate
solution in time polynomial in $1/\eps$. Take $\eps=\frac{1}{2m-1}$; note that
1/\eps is polynomial, and hence the FPTAS with this \eps will complete in
polynomial time. For an infeasible instance of 2-Partition the FPTAS would
output a path with at least $2m$ links, while for a feasible instance it will
output a path with at most $(1+\eps)(2m-1)=2m-1/2$ links. There is only one
such length possible; a path with exactly $\opt = 2m-1$ links. Hence, the FPTAS
would be able to differentiate, in polynomial time, between feasible and
infeasible instances of 2-Partition.

\paragraph{No additive approximation. } We can slightly amplify the hardness
results, showing that for any constant $K$ it is not possible to find an
additive-$K$ approximation for our problems: Concatenate $K$ instances of the
construction from the hardness proof, aligning $s$ in the instance $k+1$ with
$t$ from the instance $k$. Then there is a path with $K(2m-1)$ links through
the combined instance if the 2-Partition is feasible; otherwise $K(2m-1)+K-1$
links are necessary, Thus an algorithm, able to differentiate between instances
in which the solution has $K(2m-1)$ links and those with $K(2m-1)+K-1$ links in
$\textrm{poly}(mK)=\textrm{poly}(m)$ time, would also be able to solve
2-Partition in the same time.

\section{Algorithmic Results in \texorpdfstring{$\R^2$}{R2}}
\label{sec:apxs2d}


\subsection {Constant-factor Approximation}

\mlp22 can be solved exactly \cite{mrw}.
For \mlp12, \cite{diffuse} gives a $3$-approximation.

\subsection{PTAS}\label{sec:ptas2d}

We describe a $(1+\eps)$-approximation scheme for \mlp12, based on building a graph of edges of $D$ that are \e{$k$-link weakly visible}.

Consider the set $F$ of all edges of $D$ (that is, $\bigcup F = \ske1$).
To avoid confusion between edges of $D$ and edges of the graph we will build, we will call elements of $F$ \emph {features} (this will also allow us to extend the ideas to higher dimensions later).
Two features $f,f' \in F$ are \e{weakly visible} if there exist mutually visible points $p\in f,p'\in f'$;
more generally, we say $f,f'$ are \e{$k$-link weakly visible} if there exists a $k$-link path from $p$ to $p'$ (with the links restricted to $\ske1$).

For any constant $k\ge1$, we construct a graph $G^k = (F, E_k)$, where $E_k$ is the set of pairs of $k$-link weakly visible features.
Let $\pi^k=\{f_0,f_1,\dots,f_\ell\}$, with $f_0 \ni \s$ and $f_\ell \ni \t$ be a shortest path in $G$ from the feature containing $\s$ to the feature containing $\t$; $\ell$ is the number of links of $\pi$.
We describe how to transform $\pi^k$ into a solution to the \mlp12 problem.
Embed edges of $\pi$ into $D$ as $k$-link paths. This does not necessarily connect \s to \t since it could be that, inside a feature $f_i$, the endpoint of the edge $f_{i-1}f_i$ does not coincide with endpoint of the edge $f_if_{i+1}$; to create a connected path, we observe that the two endpoints can always be connected by two extra links via some feature that is mutually visible from both points (or a single extra link within $f_i$ if we allow links to coincide within the boundary of $D$).

\begin{lemma}The number of links in $\pi^k_*$ is at most $(1+1/k)\opt$.\end{lemma}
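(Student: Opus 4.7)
The plan is to compare $\pi^k_*$ with a fixed optimal \mlp12 path $\pi^*$ of $\opt$ links, in two steps: first bound the number of graph-edges $\ell$ of $\pi^k$ in $G^k$, then bound the blow-up when each graph-edge is embedded back into $D$.

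For the first step, I would show that $\pi^*$ itself induces a walk in $G^k$ of length at most $\lceil \opt/k \rceil$. Let $b_0 = s, b_1, \ldots, b_\opt = t$ be the bends of $\pi^*$, and pick the subsequence $b_0, b_k, b_{2k}, \ldots, b_{jk}$ with $j = \lfloor \opt/k \rfloor$, appending $t$ if $jk < \opt$. Each consecutive pair of selected bends is joined by at most $k$ links of $\pi^*$, which is a valid \mlp12 sub-path, so the features containing them are $k$-link weakly visible and therefore adjacent in $G^k$. Since $\pi^k$ is a shortest path in $G^k$, this gives $\ell \leq \lceil \opt/k \rceil$.

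For the second step, each of the $\ell$ graph-edges of $\pi^k$ expands into a path of at most $k$ links in $D$, contributing $k\ell$ links. At each of the $\ell-1$ internal features $f_i$, the incoming and outgoing embedded segments may end and start at different points $p, p' \in f_i$; since $f_i$ is a single edge of $D$, the sub-segment $[p,p'] \subseteq f_i \subseteq \ske1 \subseteq \ske2$ is itself a valid single link in \mlp12, so one bridging link per internal feature suffices. Summing gives $|\pi^k_*| \leq k\ell + (\ell - 1) = (k+1)\ell - 1$, which together with $\ell \leq \lceil \opt/k \rceil$ yields $(1+1/k)\opt$ after a short case analysis on whether $k$ divides $\opt$: when $k \mid \opt$ this is immediate; otherwise the ceiling slack is absorbed by observing that the last graph-edge of the $\pi^*$-induced walk uses only $\opt \bmod k$ actual links, so picking $\pi^k$ among shortest $G^k$-paths to also minimize embedded cost recovers the claimed bound.

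The main obstacle is the bridging step: one must use both the one-dimensional structure of features (they are line segments) and the inclusion $\ske1 \subseteq \ske2$ to argue that a single extra link per internal feature suffices. Without this observation, the fallback described in the preceding paragraph (connecting $p$ and $p'$ via a third mutually visible feature) would cost two extra links and degrade the bound to $(1+2/k)\opt$. The remaining arithmetic, combining $\ell \leq \lceil \opt/k \rceil$ with the $(k+1)\ell - 1$ upper bound on $|\pi^k_*|$, is the only other delicate point and is routine once the bridging is pinned down.
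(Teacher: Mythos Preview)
Your argument is essentially the paper's: partition $\pi^*$ into $k$-link pieces to bound $\ell \le \lceil \opt/k \rceil$, charge $k$ links per $G^k$-edge plus one bridging link per internal feature, and exploit that the final piece contributes only $r = \opt \bmod k$ links. The paper writes the total directly as $mk+m+r$ (for $\opt = mk+r$) rather than $(k+1)\ell - 1$ followed by a case split, and it leaves implicit the tie-breaking you spell out, but the substance is identical.
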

\begin{proof}Split \opt into pieces of $k$ links each (the last piece may have fewer than $k$ links); the algorithm will find $k$-link subpaths between endpoints of the pieces. In details, suppose that $\opt=mk+r$ where $m,r$ are the quotient and the remainder from division of $\opt$ by $k$; let $s=v_0,v_1,\dots,v_\opt=t$ be the vertices (bends) of \opt, and let $f_i$ be the feature to which the $ik$-th bend $v_{ik}$ belongs. Since the link distance between $v_{(i-1)k}$ and $v_{ik}$ is $k$, our algorithm will find a $k$-link subpath from $f_{i-1}$ to $f_i$, as well as an $r$-link subpath from $f_m$ to \t. The total number of links in the approximate path is thus at most $mk+m+r\le(1+1/k)(mk+r)=(1+1/k)\opt$ (if $r=0$, our algorithm will find path with at most $mk+m-1<(1+1/k)mk=(1+1/k)\opt$ links; if $r>0$, our algorithm will find path with at most $mk+r+m\le(1+1/k)(mk+r)=(1+1/k)\opt$ links).
\end{proof}

We now argue that the weak $k$-link visibility between features can be determined in polynomial time using the staged illumination: starting from each feature $f$, find the set $W(f)$ of points on other features weakly visible from $f$, then find the set weakly visible from $W^2(f)=W(W(f))$, repeat $k$ times to obtain the set $W^k(f)$ reachable from $f$ with $k$ links; feature $f'$ can be reached from $f$ in $k$ links iff $W^k(f)\cap f'\ne\emptyset$. For constant $k$, building $W^k(f)$ takes time polynomial in $n$, although possibly exponential in $k$ (in fact, for diffuse reflection explicit bounds on the complexity of $W^k(f)$ were obtained \cite{aronov1,aronovk,n9}). This can be seen by induction: Partition the set $W^{i-1}(f)$ into the polynomial number of constant-complexity pieces. For each piece $p$, each element $e$ of the boundary of the domain and each feature $f'$ compute the part of $f'$ shadowed by $e$ from the light sources on $p$---this can be done in constant time analogously to determining weak visibility between two features above (by considering the part of $p\times f'$ carved out by the occluder $e$). The part of $f'$ weakly seen from $W^{i-1}(f)$ is the union, over all parts $p$, of the complements of the sets occluded by all elements $e$; since there is a polynomial number of parts, elements and features, it follows that $W^i(f)$ can be constructed in polynomial time.

\begin{theorem}
\label{thm:ptas2d}
For a constant $k$ the path $\pi^k_*$, having at most $(1+1/k)\opt$ links, can be constructed in polynomial time.
\end{theorem}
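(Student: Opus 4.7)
The plan is to assemble three ingredients that have essentially been established above into an explicit polynomial-time algorithm realizing the claimed approximation guarantee. First, by the preceding lemma, any shortest path $\pi^k$ in the graph $G^k=(F,E_k)$ from the feature containing $\s$ to the one containing $\t$ can be turned into an $\s$-$\t$ path in $D$ whose link count is at most $(1+1/k)\opt$, provided each edge of $G^k$ is replaced by a concrete $k$-link embedding and at most two extra ``repair'' links are inserted per feature to bridge the mismatch between consecutive embeddings. So it suffices to show that (i) the edge set $E_k$ can be computed in time polynomial in $n$ for constant $k$, (ii) a shortest path $\pi^k$ in $G^k$ can be found in polynomial time, and (iii) the corresponding embedded path $\pi^k_*$ can be reconstructed from bookkeeping maintained during the construction of $E_k$.

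For (i), I would carry out the staged illumination sketched in the paragraph preceding the theorem. For each feature $f\in F$, inductively compute $W^i(f)$ from $W^{i-1}(f)$ by partitioning $W^{i-1}(f)$ into polynomially many constant-complexity pieces, and, for each piece $p$, each candidate occluder $e$ on the boundary of $D$, and each feature $f'$, computing in constant time the part of $f'$ shadowed from $p$ by $e$; the portion of $f'$ weakly visible from $W^{i-1}(f)$ is the union over pieces of the complement of the corresponding shadow union. A straightforward induction shows that the complexity of $W^i(f)$ grows by only a polynomial factor per iteration, so for constant $k$ the set $W^k(f)$ has size polynomial in $n$ (with a degree depending on $k$), and can be built in polynomial time. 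We then set $(f,f')\in E_k$ iff $W^k(f)\cap f'\neq\emptyset$. Step (ii) is immediate: a breadth-first search in $G^k$ from the feature containing $\s$ produces $\pi^k=\{f_0,\dots,f_\ell\}$ in $O(|F|+|E_k|)$ time, which is polynomial.

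For (iii), while performing the staged illumination I would tag every piece of $W^i(f)$ with a witness pointer to a parent piece of $W^{i-1}(f)$ together with the bend on the boundary used to reach it. Walking these pointers from a witness point in $W^k(f_{i-1})\cap f_i$ back to $f_{i-1}$ recovers a concrete $k$-link subpath realizing the edge $f_{i-1}f_i$. Concatenating these subpaths over all edges of $\pi^k$ and inserting, whenever the outgoing endpoint in some $f_i$ does not coincide with the incoming one, the two repair links via a commonly visible feature, yields $\pi^k_*$ in polynomial time. Combined with the lemma, this proves the theorem. The only delicate point---and the step I expect to require the most care---is verifying that the per-iteration blow-up in the complexity of $W^i(f)$ really is bounded by a polynomial in $n$ whose degree depends only on $k$; everything else is routine bookkeeping.
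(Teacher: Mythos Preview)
Your proposal is correct and follows essentially the same route as the paper: the approximation ratio comes from the preceding lemma, and polynomial running time is argued via the staged-illumination construction of $W^k(f)$ exactly as in the paragraph before the theorem. The paper leaves the path-reconstruction step (your item~(iii)) implicit, so your witness-pointer bookkeeping is a welcome bit of extra detail rather than a deviation.
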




\section {Algebraic Complexity in \texorpdfstring{$\R^3$}{R3}}
\label{sec:bit3d}

\subsection{Lower bound on the Bit complexity}

\maarten {Here we say that the lower bound from $\R^2$ obviously extends, and that we get curves on the boundary of the reachable regions so the upper bound does not extend.}

\subsection{Upper bound on the algebraic complexity}

\paragraph{Order of the boundary curves.} 
Assume the representations of the coordinates of any vertex of $D$ and $s$ require at most $c_0\log n$ bits for
some constant $c_0$. Analogous to Section~\ref{sec:bit2d}, we define a sequence
of regions $\mathcal{R}=\{R_1,R_2,R_3,\dots\}$, where $R_1$ is the set of all
points in $D$ that see $s$, and $R_{i}$ is the region of points in $D$ that see
some point in $R_{i-1}$ for $i\geq 2$, i.e., region $R_{i}$ consists of all the
points of $D$ that are illuminated by region $R_{i-1}$. Note, that $R_i$ is a
union of subsets of faces of $D$. Therefore, when we will speak of the
boundaries (in the plural form of the word) of $R_i$, that we denote as
$\partial R_i$, we will mean the illuminated sub-intervals of edges of $D$ as
well as the frontier curves interior to the faces of $D$.

\begin{figure}[t]
\centering
\includegraphics[width=0.4\textwidth]{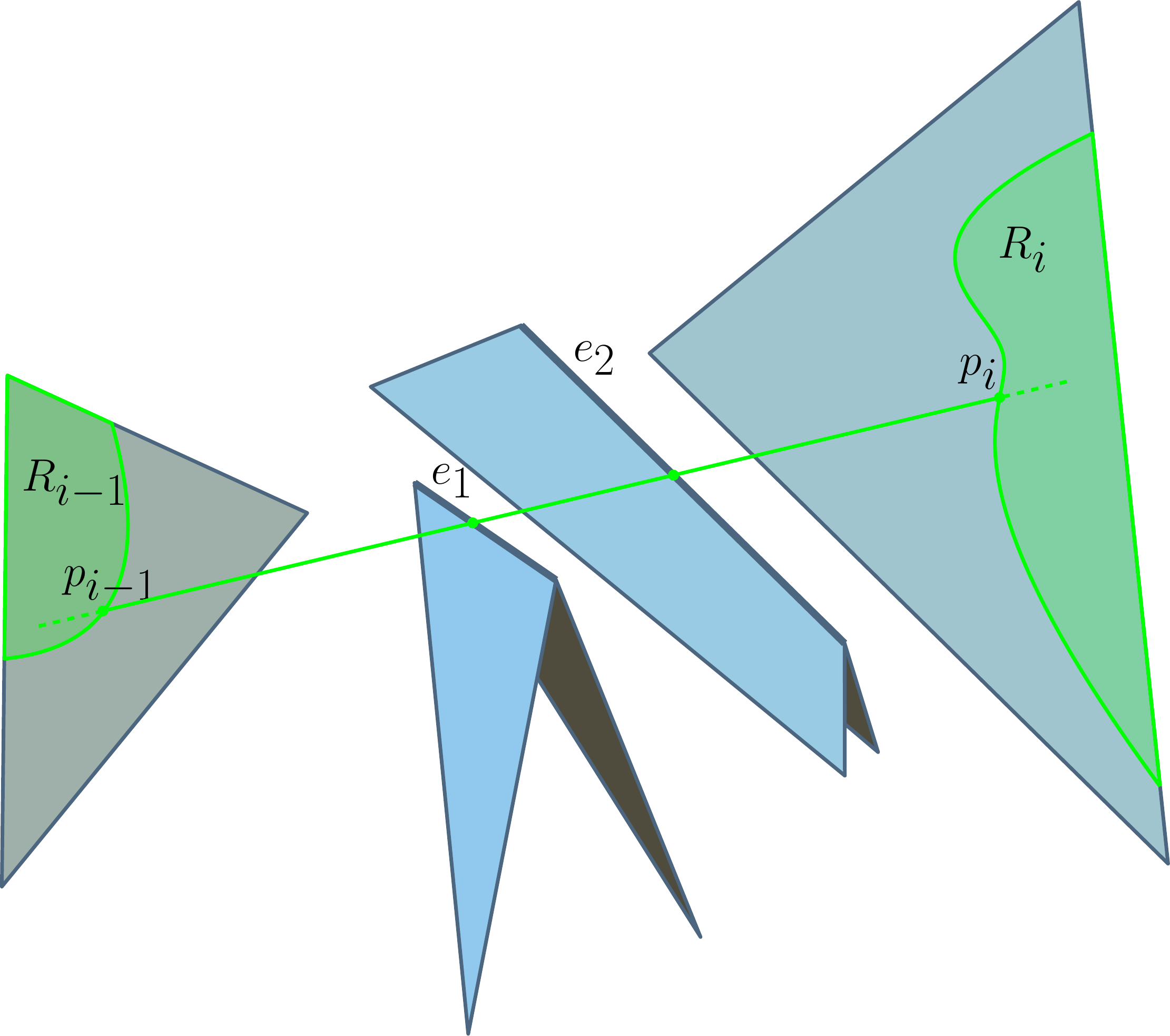}
\caption{The order of the curves on the boundaries of $R_i$ grows with $i$.}
\label{fig:3d-algeb-comp}
\end{figure}

Unlike in 2D, the boundaries of $R_i$ interior to the faces of $D$ do not necessarily consist of straight-line segments. Observe, that a union of all lines intersecting three given lines in 3D forms a hyperboloid, and therefore, illuminating a straight-line segment on the boundaries of $R_{i-1}$ leads to the corresponding part of $\partial R_{i}$ to be an intersection of a hyperboloid and a plane, i.e., a hyperbola. Moreover, consider some point $p_{i-1}\in\partial R_{i-1}$ interior to some face $f_{i-1}$ of $D$, and two edges $e_1$ and $e_2$ of the domain $D$ which $p_{i-1}$ sees partially and which will cast a shadow on some face $f_i$ of $D$ (refer to Fig.~\ref{fig:3d-algeb-comp}). Then we can express the coordinates of $p_{i}$ as:
\begin{equation}\label{eq:piofpi-1}
\begin{pmatrix}
    x_{i} \\
    y_{i} \\
    z_{i}
\end{pmatrix}=
\begin{pmatrix}
\dfrac{A_1 x_{i-1}^2+B_1 y_{i-1}^2+C_1 x_{i-1} y_{i-1} +D_1 x_{i-1} + E_1 y_{i-1} +F_1}{A x_{i-1}^2+B y_{i-1}^2+C x_{i-1} y_{i-1} +D x_{i-1} + E y_{i-1} +F}\\[12pt]
\dfrac{A_2 x_{i-1}^2+B_2 y_{i-1}^2+C_2 x_{i-1} y_{i-1} +D_2 x_{i-1} + E_2 y_{i-1} +F_2}{A x_{i-1}^2+B y_{i-1}^2+C x_{i-1} y_{i-1} +D x_{i-1} + E y_{i-1} +F}\\[12pt]
U x_{i} +V y_{i}+W
\end{pmatrix}
\,,
\end{equation}
for some constants $A_1,A_2,A,B_1,\dots,U,V,W$ that depend on the parameters of $f_{i-1}$, $f_i$, $e_1$, $e_2$. Denote a polynomial of degree $d$ as $\mathsf{poly}^d(\cdot)$, then we can rewrite the $x$- and the $y$-coordinates of $p_i$ as
\[
\begin{pmatrix}
    x_{i} \\
    y_{i}
\end{pmatrix}=
\begin{pmatrix}
\dfrac{\mathsf{poly}_{x,i-1}^2(x_{i-1},y_{i-1})}{\mathsf{poly}_{i-1}^2(x_{i-1},y_{i-1})} \\[12pt]
\dfrac{\mathsf{poly}_{y,i-1}^2(x_{i-1},y_{i-1})}{\mathsf{poly}_{i-1}^2(x_{i-1},y_{i-1})}
\end{pmatrix}=
\begin{pmatrix}
 \dfrac{\mathsf{poly}_{x,i-2}^4(x_{i-2},y_{i-2})}{\mathsf{poly}_{i-2}^4(x_{i-2},y_{i-2})} \\[12pt]
 \dfrac{\mathsf{poly}_{y,i-2}^4(x_{i-2},y_{i-2})}{\mathsf{poly}_{i-2}^4(x_{i-2},y_{i-2})}
\end{pmatrix}=
\begin{pmatrix}
 \dfrac{\mathsf{poly}_{x,0}^{2i}(x_0,y_0)}{\mathsf{poly}_0^{2i}(x_0,y_0)}\\[12pt]
 \dfrac{\mathsf{poly}_{y,0}^{2i}(x_0,y_0)}{\mathsf{poly}_0^{2i}(x_0,y_0)}
\end{pmatrix}\,,
\]
\frank{what do the subscripts to poly mean?}
where point $p_0(x_0,y_0,z_0)$ lies on some straight-line segment of $\partial D$, and we use different subscripts of the polynomials to distinguish between different expressions. Notice that the denominators of the $x_i$ and $y_i$ expressed as functions of $x_j$ and $y_j$ (for all $j<i$) are always the same. If we slide $p_0$ along the line segment, and express its coordinates in terms of a parameter $t$, we get
\[
x_{i}= \frac{\mathsf{poly}_{x}^{2i}(t)}{\mathsf{poly}^{2i}(t)}\,, \qquad y_{i}= \frac{\mathsf{poly}_{y}^{2i}(t)}{\mathsf{poly}^{2i}(t)}\,,\qquad z_{i}= \mathsf{poly}^1(x_i,y_i)\,.
\]
Thus, the curve, that point $p_i$ traces on $f_i$ is an intersection of a plane in 3D (face $f_i$) and two surfaces of order $2i+1$ in 4D space (with coordinates $x$, $y$, $z$, and $t$). Therefore, the order of that curve is not greater than $2i+1$. In fact, as we have mentioned above, for $i=1$, the curve that $p_1$ traces on face $f_1$ is a hyperbola, with order $2$, and not $2i+1=3$. The fact that the denominators of the expressions of $x_1$ and $y_1$ are the same allow to reduce the order of the expressions in the following way:
\begin{equation}\label{eq:hyperbola}
\begin{split}
x_{1}&= \frac{\mathsf{poly}_{x}^{2}(t)}{\mathsf{poly}^{2}(t)}=x'_1+\frac{\mathsf{poly}_{x'}^{1}(t)}{\mathsf{poly}^{2}(t)}\,,\\
y_{1}&= \frac{\mathsf{poly}_{y}^{2}(t)}{\mathsf{poly}^{2}(t)}=y'_1+\frac{\mathsf{poly}_{y'}^{1}(t)}{\mathsf{poly}^{2}(t)}\,,
\end{split}
\end{equation}
Therefore,
\[
\frac{x_1-x'_1}{y_1-y'_1}=\frac{\mathsf{poly}_{x'}^{1}(t)}{\mathsf{poly}_{y'}^{1}(t)}\,,
\]
and then
\[
t=poly^1(x_1,y_1)\,.
\]
Substituting this expression into Equations~\ref{eq:hyperbola} we get, that the actual order of the curve traced by $p_1$ is $2$. For larger $i$, denominators of the expressions of $x_i$ and $y_i$ are also equal, however the explicit formula for the curve traced by $p_i$ cannot be derived in a similar way. We summarize our findings:
\begin{theorem}
The boundaries of region $R_i$ are curves of order at most $2i+1$ for $i\geq 2$, and at most $2$ for $i=1$.
\end{theorem}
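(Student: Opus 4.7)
My plan is to prove the bound by induction on $i$, using the parametric representation sketched in the text. Fix a point $p_0$ moving along a line segment in $\partial D$ parametrized by $t$, and for each $i\ge 1$ let $p_i\in f_i$ denote the point whose coordinates are given by the rational expressions in Equation~\ref{eq:piofpi-1}. The claim is that, for $i\ge 2$, the two-dimensional planar curve $t \mapsto (x_i(t), y_i(t))$ obtained by sliding $p_0$ has algebraic degree at most $2i+1$, and that for $i=1$ this degree collapses to $2$.

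For the inductive step I would argue as follows. Assume by induction that $x_{i-1}, y_{i-1}$ can be written as $\mathsf{poly}_{x,0}^{2(i-1)}(t)/\mathsf{poly}_0^{2(i-1)}(t)$ and $\mathsf{poly}_{y,0}^{2(i-1)}(t)/\mathsf{poly}_0^{2(i-1)}(t)$ with a \emph{common} denominator of degree $2(i-1)$. Substituting into Equation~\ref{eq:piofpi-1}, whose numerators and the single denominator are of total degree $2$ in $(x_{i-1},y_{i-1})$, and clearing the common factor $\mathsf{poly}_0^{2(i-1)}(t)^2$ in numerator and denominator, both $x_i(t)$ and $y_i(t)$ become rational functions with a common denominator of degree $2i$ in $t$. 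Since $z_i$ is an affine function of $x_i$ and $y_i$ on the planar face $f_i$, eliminating $t$ between $x_i(t)$ and $y_i(t)$ (e.g.\ by taking a resultant) yields an implicit bivariate polynomial relation of degree at most $2i+1$, giving the claimed bound on the order of the curve traced on $f_i$.

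For the base case $i=1$, I would exploit the same-denominator structure to do the degree reduction already sketched in Equation~\ref{eq:hyperbola}. Writing $x_1 - x_1' = \mathsf{poly}_{x'}^1(t)/\mathsf{poly}^2(t)$ and $y_1 - y_1' = \mathsf{poly}_{y'}^1(t)/\mathsf{poly}^2(t)$ exhibits the ratio $(x_1-x_1')/(y_1-y_1')$ as a linear fractional function of $t$, which is invertible to give $t$ as a linear function of $(x_1,y_1)$. Back-substituting collapses the apparent degree $3$ to degree $2$, consistent with the well-known fact that the transversals to three given lines in $\R^3$ sweep out a quadric (a hyperboloid), whose intersection with the plane of $f_1$ is a conic.

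The main obstacle I anticipate is making the ``common denominator survives to step $i$'' claim airtight, since this is what prevents a naive Bezout count from blowing up the degree geometrically. Concretely, one needs to check that the denominator of Equation~\ref{eq:piofpi-1}, when expressed in the previous-step parameter, is exactly the square of $\mathsf{poly}_0^{2(i-1)}(t)$ times a new quadratic factor, with no cancellation needed across the $x$- and $y$-components. Once this structural invariant is established, the degree bookkeeping is routine and yields the stated $2i+1$ bound for $i\ge 2$, and the special reduction used in the base case is what makes $i=1$ attain only $2$.
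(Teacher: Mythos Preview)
Your proposal is correct and follows essentially the same approach as the paper: both derive the degree-$2i$ rational parametrization of $(x_i,y_i)$ in $t$ by inductively substituting into Equation~\ref{eq:piofpi-1} and exploiting the common-denominator structure, and both handle $i=1$ via the linear-fractional inversion of Equation~\ref{eq:hyperbola}. The only cosmetic difference is that the paper phrases the final elimination step as ``intersection of a plane with two surfaces of order $2i+1$ in the 4D space $(x,y,z,t)$'' rather than as a resultant; your emphasis on the common-denominator invariant is, if anything, more explicit than the paper's presentation.
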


The fact that the order of the curves on the boundaries of $R_i$ grows linearly may give hope that the bit complexity of representation of $R_i$ can be bounded from above similarly to Section~\ref{sec:bit2d-up}. However, following similar calculations we will get that the space required to store the coordinates of $p_i$ grows exponentially with $i$.

Parameters $A_1,A_2,A,B_1,\dots,W$ of Equation~\ref{eq:piofpi-1} have bit complexity not greater than $c \log n$ for some constant $c$ \irina{We probably can leave out here that $c=5 c_0+\log88$ because of some crazy formulas that will take a few pages in the appendix?}. Let $x_{i-1}$ be represented as a rational number $\nicefrac{p_x}{q_x}$, and $y_{i-1}$ be represented as a rational number $\nicefrac{p_y}{q_y}$, where $p_x$ and $q_x$, and $p_y$ and $q_y$ are two pairs of mutually prime integers. Then the number of bits required to represent $x_{i-1}$ is $sp(x_{i-1})\ge \max\{\log p_x,\log q_x\}$. Therefore, the number of bits required to represent $x_i$
\[
\begin{split}
sp(x_i)\le & \log(A_1 p_x^2 q_y^2+B_1 q_x^2 p_y^2+C_1 p_x q_x p_y q_y + D_1 p_x q_x q_y^2+E_1 q_x^2 p_y q_y + F_1 q_x^2 q_y^2)+\\
	& \log(A p_x^2 q_y^2+B q_x^2 p_y^2+C p_x q_x p_y q_y + D p_x q_x q_y^2+E q_x^2 p_y q_y + F q_x^2 q_y^2)\leq\\
\le	& 2\log(6 M r^4)=2\log 6+2\log M+8\log r \le 6+2 c\log n + 8 \max\{sp(x_{i-1}),sp(y_{i-1})\}\,,
\end{split}
\]
where $M=\max\{A_1,B_1,\dots,E,F\}$ and $r=\max\{p_x,q_x,p_y,q_y\}$. Solving the above recurrence we get $sp(x_i)\leq 9^i$, which implies exponential upper bound of the space required to store $x_i$.

\begin{lemma}\label{lem:bit3d-up}
The coordinates of a vertex of $R_i$ can be stored in $O(9^i)$ space.
\end{lemma}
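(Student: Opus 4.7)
The plan is to formalize the recurrence sketched in the paragraph preceding the lemma statement and then unfold it. By Equation~\ref{eq:piofpi-1}, each coordinate of a point $p_i \in \partial R_i$ is a rational function in the coordinates of its predecessor $p_{i-1} \in \partial R_{i-1}$, with numerator and denominator each being quadratic polynomials in $(x_{i-1}, y_{i-1})$ of six terms, whose coefficients depend on the two faces and two edges of $D$ involved in the step and have bit complexity at most $c\log n$ for a constant $c$ depending only on $c_0$.

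First I would represent $x_{i-1} = p_x/q_x$ and $y_{i-1} = p_y/q_y$ as reduced fractions, substitute into Equation~\ref{eq:piofpi-1}, and clear denominators. The result expresses $x_i$ as a ratio of two homogeneous degree-$4$ polynomials in the integers $(p_x, q_x, p_y, q_y)$, each with at most six terms and coefficients of magnitude at most $n^c$. Writing $r := \max\{p_x, q_x, p_y, q_y\}$ and bounding each term by $n^c r^4$, then taking logarithms, yields exactly the recurrence displayed in the text:
\[
sp(x_i) \;\le\; 6 + 2c\log n + 8\max\{sp(x_{i-1}),\, sp(y_{i-1})\},
\]
and symmetrically for $sp(y_i)$. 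A small but necessary remark is that reducing $x_i$ to lowest terms can only \emph{decrease} its bit complexity, so the upper bound is preserved after cancellation.

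Next I would solve the recurrence. Setting $M_i := \max\{sp(x_i),\, sp(y_i)\}$ turns it into $M_i \le 8 M_{i-1} + O(\log n)$. Unfolding yields
\[
M_i \;\le\; 8^i M_0 + O(\log n) \sum_{j=0}^{i-1} 8^j \;=\; O(8^i \log n),
\]
and since $M_0 = O(\log n)$ by the assumption on the bit complexity of the input, the closed-form bound is $M_i = O(8^i \log n)$. For any fixed input this is absorbed into $O(9^i)$ for all sufficiently large $i$, which gives the claim.

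There is no serious obstacle here: the lemma is essentially a bookkeeping corollary of Equation~\ref{eq:piofpi-1}. The one thing I would be careful about is the case analysis over vertex types of $R_i$ (points on interior frontier curves, intersections of such curves with edges of $D$, and original vertices of $D$), since the recurrence must be applied uniformly. In each case the coordinates still arise as the same rational-function template in the predecessor coordinates with coefficients of bit complexity $O(\log n)$, so the same inequality, and hence the same $O(9^i)$ bound, applies.
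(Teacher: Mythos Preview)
Your proposal is correct and follows essentially the same route as the paper: derive the recurrence $sp(x_i)\le 6+2c\log n+8\max\{sp(x_{i-1}),sp(y_{i-1})\}$ from Equation~\ref{eq:piofpi-1} by substituting reduced fractions and clearing denominators, then unfold it. You are in fact slightly more explicit than the paper, which simply asserts ``solving the above recurrence we get $sp(x_i)\le 9^i$'' without displaying the intermediate $O(8^i\log n)$ bound; your remarks on reduction to lowest terms and on the vertex-type case analysis are reasonable additional care that the paper omits.
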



\section {Computational Complexity in \texorpdfstring{$\R^3$}{R3}}
\label{sec:hardness3d}

We will show now how to use our blueprint from Section~\ref{sub:blueprint} to
build a terrain for the \mlp12 problem such that a path from $s$ to $t$ with
$2n-1$ links will exist if and only if there exists a subset $S\subseteq A$
whose sum is equal to half the sum of all integers $A=\{a_1,\dots,a_m\}$. Take
the 2D construction and bend it along all the lines $\ell_i$ and $\ell'_i$,
except $\ell_0$ and $\ell_m$ (refer to Fig.~\ref{fig:hard3D}). Let the
angles between consecutive faces be $\pi-\delta$ for some small angle
$\delta<\pi/4m$ (so that the sum of bends between the first face (between the
lines $\ell_0$ and $\ell_1$) and the last face (between the lines $\ell'_m$ and
$\ell_m$) is less than $\pi$). On each face build a fence of height
$\tan(\delta/4)$ according to the 2D construction. The height of the fences
is small enough so that no two points on consecutive fences see each
other. Therefore, for two points $s$ and $t$ placed on $\ell_0$
and $\ell_m$ as described above, an $s$-$t$ path with $2m-1$ links must bend
only on $\ell_i$ and $\ell'_i$ and pass in the slits in the fences. Finding a
min-link path on such a terrain is equivalent to finding a min-link path (with
bends restricted to $\ell_i$ and $\ell'_i$) in the 2D
construction. Therefore,

\begin{figure}[t]
\centering
\includegraphics[width=0.48\textwidth,page=1]{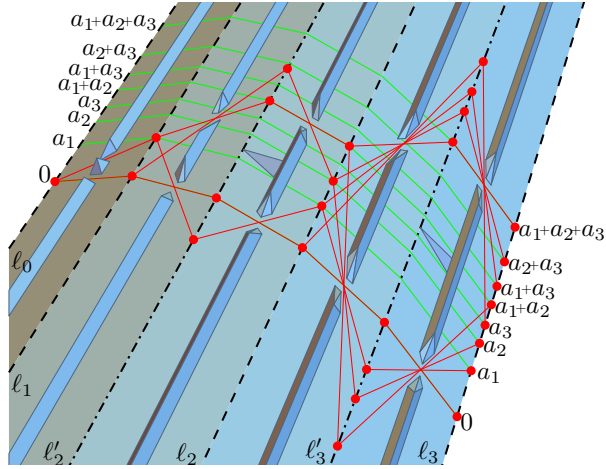}
\caption{
Right: The terrain obtained by bending the 2D construction along the important and intermediate lines. The height of the fences is low enough that no two points on consecutive fences can see each other.
}
\label{fig:hard3D}
\end{figure}

\begin{theorem}
  \label{thm:hard}
  \mlp12 on a terrain is NP-hard.
\end{theorem}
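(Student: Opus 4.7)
The plan is to formalize the construction already sketched in the paragraphs preceding the theorem. Given a 2-Partition instance $A=\{a_1,\dots,a_m\}$, I would first build the planar blueprint of Section~\ref{sub:blueprint}. Then I would fold this planar design into a 3D piecewise-planar surface by introducing a dihedral angle $\pi-\delta$, with $\delta<\pi/(4m)$, along each important line $\ell_1,\dots,\ell_{m-1}$ and each intermediate line $\ell'_2,\dots,\ell'_m$, leaving $\ell_0$ and $\ell_m$ flat at the bottom and top. Because the total rotation across all folds is strictly below $\pi$, the resulting surface still projects injectively onto the $xy$-plane, hence is a valid terrain. On each of the $2m-1$ planar strips I would place the corresponding multiplying or reversing fence as a thin ridge of height $\tan(\delta/4)$ with slits at the prescribed $x$-coordinates, and set $s\in \ell_0$, $t\in \ell_m$ at the endpoints specified in the blueprint. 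The coordinates of all vertices and the parameters $\delta,\tan(\delta/4)$ have polynomial bit length in the size of the 2-Partition instance, so this is a polynomial-time construction.

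The heart of the proof is a forcing lemma: every $s$--$t$ path on this terrain with at most $2m-1$ links must bend only at the fold edges $\ell_i,\ell'_i$ and must pass through the slits. Since links of an \mlp{1}{2} solution lie in $\ske 2$ and the terrain is piecewise planar, each link is confined to a single planar face (either a strip between two fold lines or one of the fence faces). Consequently, every crossing of a fold line forces a bend. There are $2m-2$ fold lines between $\ell_0$ and $\ell_m$, so every $s$--$t$ path uses at least $2m-1$ links, and any path attaining this bound must use exactly one link per strip and bend on each fold line. It remains to rule out the possibility that the path escapes a slit by climbing onto a fence: a link that enters a fence face is a dead end unless followed by another bend on a fence edge, which already exceeds the $2m-1$ budget. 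The inequality $\delta<\pi/(4m)$ together with fence height $\tan(\delta/4)$ guarantees that no point on a fence in one strip is visible along a straight line within $\ske 2$ to any point on a fence in a neighboring strip, so fences cannot be bypassed across folds either.

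Once the forcing lemma is in place, the equivalence with the 2D blueprint follows directly: a $(2m-1)$-link path on the terrain projects to a $(2m-1)$-link path in the plane whose bends are restricted to $\ell_i$ and $\ell'_i$ and whose links pass only through the slits of the multiplying and reversing fences, and conversely every such planar path lifts uniquely to a $(2m-1)$-link path on the terrain. By the analysis in Section~\ref{sub:blueprint} this happens iff the 2-Partition instance is feasible. Since 2-Partition is (weakly) NP-hard, so is \mlp{1}{2} on a terrain.

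The main obstacle is the geometric verification of the forcing lemma, in particular the quantitative visibility claim that with dihedral $\pi-\delta$ and fence height $\tan(\delta/4)$ no straight segment lying on the terrain surface can connect two fences on consecutive strips or cut across a fence in its own strip. This needs a careful angular calculation: one must check that the slope a link would need in order to clear the ridge of a fence would force the link out of the face it lies in, thereby requiring an additional bend and violating the $2m-1$ bound. The remainder of the argument is a bookkeeping exercise reusing the blueprint.
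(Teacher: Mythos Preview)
Your proposal is correct and follows the paper's approach exactly: fold the 2D blueprint along the important and intermediate lines with dihedral angle $\pi-\delta$ for $\delta<\pi/(4m)$, erect low fences of height $\tan(\delta/4)$ with the prescribed slits, and argue that any $(2m-1)$-link path must bend precisely on the crease edges and pass through the slits, thereby reducing the problem to the 2D blueprint. Your explicit link-count via the $2m-2$ fold lines is a useful elaboration the paper leaves implicit; note, however, that for \mlp12 the cross-strip fence-to-fence visibility you flag as the ``main obstacle'' is automatic---a straight segment in $\ske 2$ cannot span a nontrivial dihedral crease---so the quantitative fence-height condition is really needed only for the $b=3$ variants the paper handles after the theorem.
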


\begin{remark}
  Instead of $0$-width slits, we could use slits of positive width $ w=o (
  \frac{1}{4m} )$; since the width of the light beam grows by $2w$ between two
  consecutive creases, on the last crease the maximum shift of the path due to
  the positive slits width will be at most $ (2m-1)\times 2w<1$.  \maarten {Do
    we need the shift to be $< \frac12$ to make sure two points cannot shift to
    the same place?}
\end{remark}

Observe that bending in the interior of a face cannot reduce the link distance
between $s$ and $t$. Hence, our reduction also shows that \mlp22 is
NP-hard. Furthermore, lifting the links from the terrain surface into $\R^3$
also does not reduce link distance; we can make sure that the fences are low in
height, so that fences situated on different faces of the creased rectangle do
not see each other. Therefore, jumping onto the fences is useless. Hence,
\mlp13 and \mlp23 are also NP-hard.

\paragraph{\mlp{a}{b} in general polyhedra} Since a terrain is a special case
of a 3D polyhedra, it follows that \mlp12, \mlp22, \mlp13, and \mlp23 are also
NP-hard for an arbitrary polyhedral domain in $\R^3$. Our construction does not
immediately imply that \mlp33 is NP-hard. However, we can put a copy of the
terrain slightly above the original terrain (so that the only free space is the
thin layer between the terrains). When this layer is thin enough, the ability
to take off from the terrain, and bend in the free space, does not help in
decreasing the link distance from \s to \t. Thus, \mlp33 is also NP-hard.

\begin{corollary}
  \mlp{a}{b}, with $a \geq 1$ and $b \geq 2$, in a 3D domain $D$ is NP-hard. This holds even
  if $D$ is just a terrain.
\end{corollary}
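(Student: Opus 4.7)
The plan is to extend Theorem~\ref{thm:hard} to cover the remaining pairs $(a,b)$ listed in the corollary, namely $(2,2)$, $(1,3)$, $(2,3)$ on the terrain, and $(3,3)$ in a general 3D polyhedral domain. Each case will be handled by reusing the terrain instance $(T,\s,\t)$ built in Theorem~\ref{thm:hard} and showing that enlarging the class of admissible paths---either by allowing bends in the relative interior of a face, or by allowing links to leave the terrain surface into 3D free space, or both---cannot decrease the minimum link count below $2m-1$. Hence the \textsc{2-Partition} instance is feasible if and only if the minimum-link path has exactly $2m-1$ links in every one of these richer models.

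The case \mlp22 is the easiest: a bend placed in the relative interior of a face can be removed by extending the two incident links until they hit the face boundary, producing a path with no more links whose bends lie in \ske1. Consequently every \mlp22 path on $T$ can be converted into an \mlp12 path of equal length, and the $2m-1$ lower bound of Theorem~\ref{thm:hard} transfers immediately.

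For \mlp13 and \mlp23 I would argue that opening up the 3D free half-space above $T$ provides no useful shortcut. Recall that consecutive faces of $T$ meet at dihedral angle $\pi-\delta$ with $\delta<\pi/(4m)$, and that each fence has height $\tan(\delta/4)$. By choosing $\delta$ sufficiently small, I would show that no two points on distinct fences are mutually visible through the half-space above $T$: the steep creasing folds non-adjacent faces ``out of sight'' of each other, and the fences on adjacent faces are too low to peek over the next crease. Under this visibility bound, any single link that starts and ends on the terrain and crosses a crease must still pass through the corresponding slit, so the phase structure enforced by the blueprint of Section~\ref{sub:blueprint} is preserved and the reduction goes through verbatim.

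For \mlp33 in a general 3D domain I would use the ``double terrain'' construction sketched in the preceding discussion: take $D$ to be the thin slab bounded below by $T$ and above by a vertically translated copy $T'$ at height $\eta$. For $\eta$ much smaller than all slit widths and fence clearances, every link in $D$ deviates vertically by at most $\eta$, so its horizontal projection still respects the fence/slit combinatorics; free-space bends inside the slab therefore cannot beat the terrain lower bound. The main obstacle throughout is the quantitative choice of the parameters $\delta$ and $\eta$: one must simultaneously kill all potential 3D shortcuts, including grazing links that barely overshoot a fence, which requires an elementary but careful visibility analysis balancing fence heights, slit widths, dihedral angles, and slab thickness against one another. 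With these choices pinned down, the five cases combine into the statement of the corollary.
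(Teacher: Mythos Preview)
Your proposal is correct and follows essentially the same plan as the paper: you reuse the terrain from Theorem~\ref{thm:hard} and argue separately that (i) interior-of-face bends cannot help for \mlp22, (ii) lifting links into $\R^3$ cannot help for \mlp13 and \mlp23 because fences on different strips are mutually invisible, and (iii) \mlp33 requires the thin-slab (``double terrain'') modification---exactly the case split and the constructions the paper uses. Your explicit acknowledgment that the quantitative parameter choice ($\delta$, fence height, slab thickness $\eta$) is where the real work lies matches the paper's treatment, which also leaves those computations implicit.

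One small remark on your \mlp22 step: the mechanism you describe---``extending the two incident links until they hit the face boundary''---does not straightforwardly work when the face is non-convex, and here the strip faces \emph{are} non-convex because of the fence footprints; after sliding the bend to a boundary point the second incident link may now cross the fence. The cleaner justification, which is what the paper's one-line observation really rests on, is that every link of a \mlp22 path is a straight 3D segment lying in a single planar face of the terrain, so crossing each of the $2m-2$ creases forces a bend \emph{on} that crease; consequently any $(2m-1)$-link \mlp22 path already has all its bends on creases and is a valid \mlp12 path. Your conclusion is right, only the stated mechanism is imprecise.
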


\section{Algorithmic Results in \texorpdfstring{$\R^3$}{R3}}
\label{sec:apxs3d}


\subsection {Constant-factor Approximation}
\label{sec:const3d}

Our approximations refine and extend the 2-approximation for minimum-link paths in higher dimensions suggested in Chapter~26.5 (section Other Metrics) of the handbook \cite{handbook04} (see also Ch.~6 in \cite{piatko}); since the suggestion is only one sentence long, we fully quote it here:\begin{quote}Link distance in a polyhedral domain in $\mathbb R^d$ can be approximated (within factor 2) in polynomial time by searching a weak visibility graph whose nodes correspond to simplices in a simplicial decomposition of the domain.\end{quote}

Indeed, consider $\ske{a}$, the set of all points where the path is allowed to bend, and
decompose $\ske{a}$ into a set $F$ of small-complexity convex pieces, and call each piece a \e{feature}. Similar to Section~\ref {sec:ptas2d}, we say two features $f$ and $f'$ are \e{weakly visible} if there exist mutually visible points $p\in f$ and $p'\in f'$; more generally, the \e{weak visibility} region $W(f)$ is the set of points that see at least one point of $f$, so $f'$ is weakly visible from $f$ iff $f'\cap W(f)\ne\emptyset$ (in terms of \e{illumination} $W(f)$ is the set of points that get illuminated when a light source is put at every point of $f$).  See Fig.~\ref{fig:param_space} for an illustration.
\begin{figure}
\centering
  \includegraphics{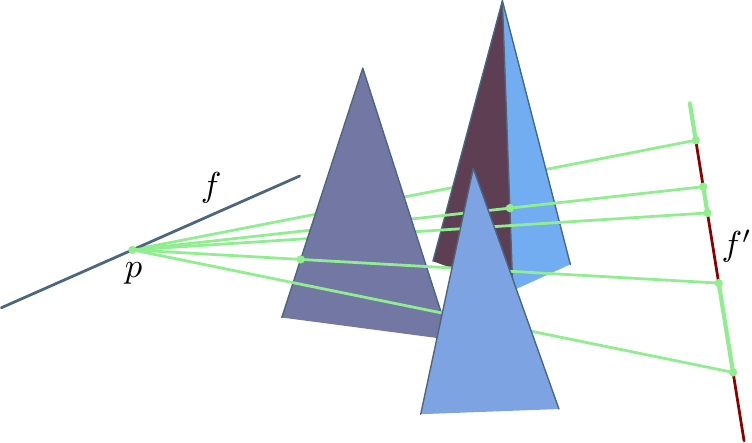}
  \caption{The weak visibility $W(f)$ restricted to edge $f'$ is the union of
    all visible intervals (green) over all points $p \in f$. If this region is
    non-empty, $f$ and $f'$ are weakly visible.}
  \label{fig:param_space}
\end{figure}

Weak visibility between two features $f$ and $f'$ can be determined straightforwardly by building the set of pairs of points $(p,p')$ in the parameter space $f\times f'$ occluded by (each element of) the obstacles. To be precise, $f\times f'$ is a subset of $\R^{2a}$.
Now, consider $\ske{d-1}$, which we also decompose into a set of constant-complexity \e{elements}. Each element $e$ defines the set $B(e)=\{(p,p')\in f\times f': pp'\cap e\ne\emptyset\}$ of pairs of points that it blocks; since $e$ has constant complexity, the boundary of $B(e)$ consists of a constant number of curved surfaces, each described by a low degree polynomial. Since there are $O(n)$ elements, the union (and, in fact, the full arrangement) of the sets $B(e)$ for all $e$ can be built in $O(n^{4a-3+\eps})$ time, for an arbitrarily small $\eps > 0$, or $O(n^2)$ time in case $a = 1$~\cite{agarwal2000arrangements}. We define the \e{visibility map} $M(f,f')\subseteq f\times f'$ to be the complement of the union of the blocking sets, i.e., the map is the set of mutually visible pairs of points from $f\times f'$. We have:
\begin{lemma}
\label{lem:map}
$M(f,f')$ can be built in $O(n^{\max (2,4a-3+\eps)})$ time, for an arbitrarily small $\eps > 0$.
\end{lemma}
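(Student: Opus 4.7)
The plan is to realize $M(f,f')$ as the complement, inside the convex region $f\times f'\subset\R^{2a}$, of a union of $n$ constant-complexity semi-algebraic sets, and then invoke the arrangement-construction results of~\cite{agarwal2000arrangements}.

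First I would fix a parametrization: since each feature is a convex piece of \ske{a}, one can use $a$ affine coordinates on $f$ and $a$ on $f'$, realizing $f\times f'$ as a constant-complexity convex region in $\R^{2a}$. For a fixed element $e\subset\ske{d-1}$, the condition that the segment $pp'$ meets $e$ is obtained by eliminating the parameter $t\in[0,1]$ from $p+t(p'-p)\in e$; this yields a semi-algebraic description of $B(e)$ whose boundary consists of a constant number of patches defined by polynomials of constant degree, where the degrees depend only on the (constant) complexities of $e$, $f$, and $f'$. In particular every $B(e)$ has $O(1)$ description complexity, and there are $n$ of them.

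Next I would build the arrangement of the boundary patches of the $B(e)$'s inside $f\times f'$; then $M(f,f')$ is read off as the union of those cells of the arrangement that lie in no $B(e)$. For $a=1$ this is an arrangement of $O(n)$ constant-degree algebraic arcs in the plane, with $\Theta(n^2)$ combinatorial complexity and $O(n^2)$ construction time via standard sweep-line or randomized-incremental methods. For $a\ge 2$ the ambient space is $\R^{2a}$, and I would invoke the arrangement-construction bound of~\cite{agarwal2000arrangements} for $n$ constant-complexity algebraic surface patches in $\R^{2a}$, which runs in $O(n^{4a-3+\eps})$ time for any $\eps>0$. After the arrangement has been built, marking each cell with the subset of $B(e)$'s that contain it, and then collecting the cells whose label is empty, is a polynomial-time bookkeeping step whose cost is absorbed by the arrangement construction.

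The overall running time is therefore $O(n^{\max(2,4a-3+\eps)})$, matching the claim. The main point that will need care in a full write-up is verifying that the $B(e)$ are genuine constant-complexity semi-algebraic sets in the chosen parametrization, and that the cited arrangement result applies to the boundary patches in the required form (i.e.\ constant-description-complexity patches in $\R^{2a}$, not merely full-dimensional hypersurfaces in general position); both follow from the constant description complexities of $e$, $f$, and $f'$, but should be made explicit before applying the cited bound.
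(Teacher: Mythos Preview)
Your proposal is correct and follows essentially the same route as the paper: parametrize $f\times f'$ as a constant-complexity region in $\R^{2a}$, observe that each blocking set $B(e)$ is a constant-complexity semi-algebraic set, build the arrangement of the $O(n)$ sets $B(e)$ via the bound of~\cite{agarwal2000arrangements} (giving $O(n^2)$ for $a=1$ and $O(n^{4a-3+\eps})$ otherwise), and take the complement. Your write-up adds helpful detail on the parametrization and on why each $B(e)$ has constant description complexity, but the argument is the same as the paper's.
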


The features $f$ and $f'$ weakly see each other iff $M(f,f')$ is not empty. Let $G$ be the graph on features whose edges connect weakly visible features; \s and \t are added as vertices of $G$, connected to features (weakly) seen from them. Let $\pi=\{f_0,f_1,\dots,f_\ell\}$, with $f_0 = \s$ and $f_\ell = \t$ be a shortest \s-\t path in $G$; $\ell$ is the length of $\pi$. Embed edges of $\pi$ into the geometric domain, putting endpoints of the edges arbitrarily into the corresponding features. This does not necessarily connect \s to \t since it could be that, inside a feature $f_i$, the endpoint of the edge $f_{i-1}f_i$ does not coincide with endpoint of the edge $f_if_{i+1}$; to create a connected path, connect the two endpoints by an extra link within $f_i$ (this is possible since the features are convex).

Bounding the approximation ratio of the above algorithm is straightforward: Let \opt denote a min-link \s-\t path and, abusing notation, also the number of links in it. Consider the features to which consecutive bends of \opt belong; the features are weakly visible and hence are adjacent in $G$. Thus $\ell\le\opt$. Adding the extra links inside the features adds at most $\ell-1$ links. Hence the total number of links in the produced path is at most $2\ell-1<2\opt$.

Since $G$ has $O(n)$ nodes and $O(n^2)$ edges, Dijkstra's algorithm will find
the shortest path in it in $O(n^2)$ time.

\begin{theorem}
\label{thm:2}
(cf.\ \cite[Ch.~27.5]{handbook04}.) A 2-approximation to \mlp{a}{b} can be found in \linebreak $O(n^{2+\max(2,4a-3+\eps)})$ time, where $\eps > 0$ is an arbitrarily small constant.
\end{theorem}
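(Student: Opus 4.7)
The plan is essentially to make rigorous the two-paragraph sketch immediately preceding the theorem statement, filling in the running-time accounting. The core ideas are already in place: decompose the allowed bend set into features, build a graph whose edges encode weak visibility, run a shortest-path computation, then lift the combinatorial path to a geometric one with only a constant-factor blow-up. I will carry out the argument in the following order.

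First I would set up the combinatorial infrastructure. Decompose $\ske{a}$ into a collection $F$ of $O(n)$ convex pieces of constant complexity (the \emph{features}), and decompose $\ske{d-1}$ similarly into $O(n)$ constant-complexity \emph{elements}. Define the graph $G=(F\cup\{\s,\t\},E)$ whose edges join weakly visible pairs of features, augmented with edges from $\s$ and $\t$ to each feature they weakly see. By Lemma~\ref{lem:map} each visibility map $M(f,f')$ can be constructed in $O(n^{\max(2,4a-3+\eps)})$ time, so testing whether $M(f,f')\ne\emptyset$ falls within the same bound. Iterating over all $O(n^2)$ feature pairs builds $G$ in $O(n^{2+\max(2,4a-3+\eps)})$ time.

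Second, I would run Dijkstra on $G$ to obtain a shortest $\s$-$\t$ path $\pi=f_0,f_1,\dots,f_\ell$ of combinatorial length $\ell$. This takes $O(n^2)$ time and is therefore absorbed into the bound above. For the approximation guarantee, note that the features containing consecutive bends of an optimal path \opt form a walk in $G$, so $\ell\le\opt$.

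Third, I would turn $\pi$ into a geometric path. For each graph edge $f_{i-1}f_i$, pick a witnessing pair $(p,p')\in M(f_{i-1},f_i)$, producing a visibility segment between the features. Because the two visibility witnesses landing in the same $f_i$ (one from the side of $f_{i-1}$, one from the side of $f_{i+1}$) need not coincide, I would then insert a connector link inside $f_i$ joining them; convexity of the feature guarantees that this link lies inside $\ske b$, since links from \ske a can be realized inside \ske b by construction of the $\mlp{a}{b}$ setting. At most $\ell-1$ such connectors are needed, yielding a valid $\s$-\t path with at most $2\ell-1<2\opt$ links.

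The only step requiring real care is the running-time accounting, and it is immediate once Lemma~\ref{lem:map} is in hand: the dominant cost is the $O(n^2)$ visibility-map constructions, each of which costs $O(n^{\max(2,4a-3+\eps)})$, and the Dijkstra call contributes only lower-order terms. No step presents a genuine obstacle; the statement is essentially a corollary of Lemma~\ref{lem:map} together with the already-given sketch of the weak-visibility-graph approach.
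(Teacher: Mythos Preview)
Your proposal is correct and matches the paper's own argument essentially step for step: decompose $\ske{a}$ into $O(n)$ convex features, invoke Lemma~\ref{lem:map} on each of the $O(n^2)$ pairs to build the weak-visibility graph, run Dijkstra, and then realize the combinatorial path geometrically with at most one extra connector link per intermediate feature, giving $2\ell-1<2\opt$ links. The running-time accounting you give is exactly what the paper does, so there is nothing to add.
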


Interestingly, the running time in Theorem~\ref{thm:2} depends only on $a$, and not on $b$ or $d$, the dimension of $D$ (of course, $a\le d$, so the runtime is bounded by $O(n^{2+\max(2,4d-3+\eps)})$ as well).

\subsection{PTAS}

To get a $(1+1/k)$-approximation algorithm for any constant $k\ge1$, we expand the above handbook idea by searching for shortest \s-\t path $\pi^k$ in the graph $G^k$ whose edges connect features that are \e{$k$-link weakly visible}. 
Similarly to Section~\ref{sec:ptas2d}, we obtain the following.

\begin{theorem}
\label{thm:ptas3d}
For a constant $k$ the path $\pi^k_*$, having at most $(1+1/k)\opt$ links, can be constructed in polynomial time.
\end{theorem}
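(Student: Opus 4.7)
The plan is to mimic the 2D PTAS from Section~\ref{sec:ptas2d}, but carry it out in the $a$-skeleton / $b$-skeleton setting in $\R^3$, and use the constant-factor machinery of Theorem~\ref{thm:2} as a black box for the single-link visibility step. First I would decompose $\ske{a}$ into constant-complexity convex features $F$ exactly as in Section~\ref{sec:const3d}, and define the graph $G^k=(F,E_k)$, where $(f,f')\in E_k$ iff there exist points $p\in f$ and $p'\in f'$ connected by a $k$-link path whose links lie in $\ske{b}$. I would then compute a shortest \s-\t path $\pi^k$ in $G^k$, embed its edges as $k$-link polygonal paths in the domain, and stitch consecutive pieces together inside each intermediate feature using at most one extra link (which is possible because the feature is convex), yielding the candidate path $\pi^k_*$.

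The approximation bound follows by the same averaging argument as the 2D lemma preceding Theorem~\ref{thm:ptas2d}: write $\opt=mk+r$, let $v_0,\dots,v_{\opt}$ be the bends of an optimal path, and let $f_i$ be the feature containing $v_{ik}$. Each pair of consecutive features $(f_{i-1},f_i)$ is by construction $k$-link weakly visible, so $\pi^k$ has at most $m+1$ edges (the last being an $r$-link hop to the feature of \t), contributing $mk+r$ links. The stitching adds at most $m$ extra links, giving a total of at most $mk+m+r\le(1+1/k)(mk+r)=(1+1/k)\opt$.

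It remains to show that $G^k$ can be built in polynomial time for constant $k$. I would use staged illumination: starting from each feature $f\in F$, inductively compute the set $W^i(f)\subseteq\ske{b}$ of points reachable from $f$ with an $i$-link path. For $i=1$, this is the weak visibility set $W(f)$, which can be extracted from the visibility map $M(f,f')$ of Lemma~\ref{lem:map} applied to each $f'$. For the inductive step, partition $W^{i-1}(f)$ into polynomially many constant-complexity pieces; for each such piece $p$, each boundary element $e$ of $\ske{d-1}$, and each candidate feature $f'$, compute the subset of $f'$ occluded by $e$ when the light sources lie on $p$, exactly as in the 2D PTAS discussion. Taking the union over all pieces $p$ of the complements over all occluders $e$ yields the restriction of $W^i(f)$ to $f'$, and hence $W^i(f)$ as a whole. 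Finally, $(f,f')\in E_k$ iff $W^k(f)\cap f'\neq\emptyset$, and shortest paths in $G^k$ are found by Dijkstra.

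The main obstacle is that in $\R^3$ the boundary of $W^i(f)$ can consist of algebraic curves of order growing with $i$ (as shown in Section~\ref{sec:bit3d}), unlike the purely linear 2D case, so the combinatorial complexity of $W^i(f)$ and the cost of the incremental step may blow up rapidly in $i$. However, for fixed $k$ each stage still involves only polynomially many constant-degree algebraic pieces, and standard arrangement machinery (of the type used in Lemma~\ref{lem:map}) computes each stage in $n^{O(1)}$ time, with constants depending on $k$. This keeps the total running time polynomial in $n$ for every constant $k$, completing the theorem.
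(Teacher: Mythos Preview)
Your proposal is correct and follows essentially the same approach as the paper: decompose $\ske{a}$ into convex features, build $G^k$ via staged illumination $W^i(f)$ computed by iterated occlusion-arrangement over constant-complexity pieces, take a shortest path and stitch within features, and bound the link count by the same $mk+m+r\le(1+1/k)\opt$ averaging argument. The only minor slip is writing $W^i(f)\subseteq\ske{b}$ where you want the intermediate bends (and hence the iterated regions) to live on the features in $\ske{a}$; your subsequent inductive step already does this correctly.
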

\begin{proof}The approximation factor follows from the same argument as in Section~\ref{sec:ptas2d}. To show the polynomial running time, we argue that the weak $k$-link visibility between features can be determined in polynomial time using the staged illumination: starting from each feature $f$, find the set $W(f)$ of points on other features weakly visible from $f$, then find the set weakly visible from $W^2(f)=W(W(f))$, repeat $k$ times to obtain the set $W^k(f)$ reachable from $f$ with $k$ links; feature $f'$ can be reached from $f$ in $k$ links iff $W^k(f)\cap f'\ne\emptyset$. For constant $k$, building $W^k(f)$ takes time polynomial in $n$, although possibly exponential in $k$ (in fact, for diffuse reflection explicit bounds on the complexity of $W^k(f)$ were obtained \cite{aronov1,aronovk,n9}). This can be seen by induction: Partition the set $W^{i-1}(f)$ into the polynomial number of constant-complexity pieces. For each piece $p$, each element $e$ of the boundary of the domain and each feature $f'$ compute the part of $f'$ shadowed by $e$ from the light sources on $p$---this can be done in constant time analogously to determining weak visibility between two features above (by considering the part of $p\times f'$ carved out by the occluder $e$). The part of $f'$ weakly seen from $W^{i-1}(f)$ is the union, over all parts $p$, of the complements of the sets occluded by all elements $e$; since there is a polynomial number of parts, elements and features, it follows that $W^i(f)$ can be constructed in polynomial time.
\end{proof}

\subsection{The global visibility map of a terrain}
\label{sec:vis}

Using the result from Theorem~\ref{thm:2} for \mlp23 on terrains, we get a
2-approximate min-link path in $O(n^{7+\eps})$ time (since the path can bend
anywhere on a triangle of the terrain, the features are the triangles and
intrinsic dimension $d=2$). In this section we show that a faster,
$O(n^4)$-time 2-approximation algorithm is possible. We also consider encoding
visibility between all points on a terrain (not just between features, as the
visibility map from Section~\ref{sec:apxs3d} does): we give an $O(n^4)$-size
data structure for that, which we call the terrain's \e{global visibility map},
and provide an example showing that the size of the structure is worst-case
optimal.

We start with connecting approximations of \mlp23 and \mlp13 on terrains. Let
\opt be an optimal solution in an instance of \mlp23, let $\opt_e$ be the
optimal solution to \mlp13 in the same instance, and let \apxe be the
2-approximate path for the \mlp13 version output by the algorithm in
Section~\ref{sec:apxs3d} (Theorem~\ref{thm:2}); abusing notation, let \opt,
$\opt_e$ and \apxe denote also the number of links in the paths. Clearly,
$\apxe\le2\opt_e$; what we show is that actually a stronger inequality holds
(the inequality is stronger since $\opt\le\opt_e$):\begin{lemma}
\label{lem:e}$\apxe\le2\opt$.
\end{lemma}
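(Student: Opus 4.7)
The plan is to prove the stronger statement $\ell_e \le \opt$, where $\ell_e$ is the number of links on a shortest \s-\t path in the weak edge-visibility graph $G$ over which the \mlp13 algorithm of Theorem~\ref{thm:2} runs Dijkstra (the nodes of $G$ are the terrain edges together with \s and \t, and arcs connect weakly visible pairs). Combined with the bound $\apxe \le 2\ell_e - 1$ supplied by Theorem~\ref{thm:2}, this immediately yields $\apxe < 2\opt$.

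Fix an optimal \mlp23 path $\pi^* = (s = v_0, v_1, \ldots, v_\opt = t)$. For each internal bend $v_i$ I associate a terrain edge $e_i$ as follows: if $v_i$ already lies on a terrain edge, let $e_i$ be that edge; otherwise $v_i$ lies in the relative interior of a triangular face $F_i$, and $e_i$ is chosen to be an edge of $\partial F_i$ specified below. The goal is to show that $(s, e_1, \ldots, e_{\opt-1}, t)$ is a valid walk of length $\opt$ in $G$, which forces $\ell_e \le \opt$.

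The main step is to verify that consecutive $e_{i-1}$ and $e_i$ are weakly visible, i.e., that there exist mutually visible $p \in e_{i-1}$ and $q \in e_i$. I would establish this by a pairwise sliding argument: since $v_{i-1}$ and $v_i$ are mutually visible, I first slide $v_i$ along $F_i$ toward one of its boundary edges while keeping $v_{i-1}$ fixed, maintaining visibility throughout. In the generic case the motion terminates on an edge $e_i$ of $F_i$ with $v_{i-1}$ still seeing the landing position. Then, fixing this new position of $v_i$, I apply the symmetric slide in $F_{i-1}$ to land on an edge $e_{i-1}$, producing the desired visible pair $p \in e_{i-1}$, $q \in e_i$.

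The main obstacle will be the non-generic case, in which the visibility region of $v_{i-1}$ inside $F_i$ is pinched off from $\partial F_i$ by a shadow arc cast by some occluding terrain edge $e^\star$. In this case the silhouette edge $e^\star$ is itself weakly visible from $v_{i-1}$, because points on $F_i$ arbitrarily close to the shadow arc see $v_{i-1}$, hence so do points on $e^\star$ arbitrarily close to the silhouette; applying the symmetric argument at $v_{i+1}$ shows that $e^\star$ is also weakly visible from $v_{i+1}$. Substituting $e^\star$ for $e_i$ in the walk therefore preserves both its length and its weak-visibility adjacencies to $e_{i-1}$ and $e_{i+1}$. Combining all the choices yields the required walk of length $\opt$ in $G$, giving $\ell_e \le \opt$ and completing the proof.
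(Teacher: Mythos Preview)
Your high-level plan—show $\ell_e\le\opt$ and combine with $\apxe\le 2\ell_e-1$—is reasonable, but the argument you give does not establish $\ell_e\le\opt$. The pairwise sliding you describe assigns, for each consecutive pair $(v_{i-1},v_i)$, an edge of $\partial F_i$ reachable from $v_{i-1}$ and then an edge of $\partial F_{i-1}$ reachable from the new position. But this produces \emph{two} candidate edges at every internal bend $v_i$: one coming from the pair $(v_{i-1},v_i)$ and another from $(v_i,v_{i+1})$, and nothing in your argument forces them to coincide. Without a single consistent choice of $e_i$, the walk $(s,e_1,\dots,e_{\opt-1},t)$ is not well-defined; the best you get from pairwise choices is a walk of length roughly $2\opt$, not $\opt$.

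The non-generic case is also mishandled. The silhouette edge $e^\star$ is determined by an occluder on the sightline from $v_{i-1}$; the ``symmetric argument at $v_{i+1}$'' would produce a \emph{different} silhouette edge, not $e^\star$, so there is no reason $e^\star$ is weakly visible from $v_{i+1}$. (In fact, on a terrain this case cannot occur: the vertical plane through $v_{i-1}$ and $v_i$ always meets $\partial F_i$ at a point still visible from $v_{i-1}$, because the cross-section is $x$-monotone.)

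The paper sidesteps the consistency issue entirely. It does not attempt to assign one edge per bend. Instead it treats each link $v_{i-1}v_i$ independently: the vertical plane through $v_{i-1}$ and $v_i$ yields mutually visible points $p'\in\partial F_{i-1}$ and $q'\in\partial F_i$ (this is exactly where the terrain hypothesis is used). Consecutive replaced links are then reconnected by a single segment inside the shared triangle $F_i$. The outcome is a concrete $s$--$t$ path with all bends on terrain edges and at most $2\opt-1$ links, from which the lemma follows directly—no global consistency condition is needed.
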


\begin{proof}
Consider some link $pq$ on optimal path \opt from $s$ to $t$. Draw a vertical plane through $p$ and $q$ and denote as $p'$ and $q'$ the uppermost intersections of this plane with the boundaries of the triangles containing $p$ and $q$ (refer to Fig.~\ref{fig:2-appx}). Then $p'$ and $q'$ see each other, and they lie on edges of the terrain.
\begin{figure}
\centering
\includegraphics{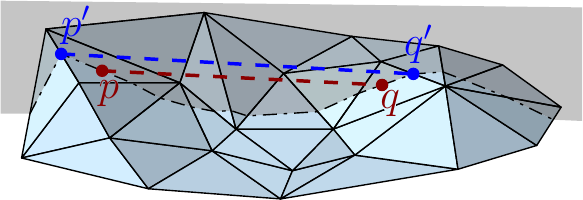}
\caption{For every pair of points $p \in f_p$ and $q \in f_q$ that can see each other, there exist points $p'$ and $q'$ on the edges bounding $f_p$ and $f_q$, respectively, that can also see each other.}
\label{fig:2-appx}
\end{figure}

Replace every link $pq$ of \opt by $p'q'$, and interconnect the consecutive
links by straight segments. Such interconnecting segments will belong to an
edge of the terrain, or go through the interior of a triangle containing the
corresponding vertex of the optimal path. The resulting chain of edges is a
proper path from $s$ to $t$ whose bends lie only on edges of the terrain. Thus,
it has a corresponding path in graph $G$ (refer to Theorem~\ref{thm:2}). The
length of such a path is at most $2\opt-1$, and it is not shorter than $\apxe$
(the shortest path in $G$). Therefore, $\apxe\le2\opt$.

\end{proof}

Lemma~\ref{lem:e} allows us to use the 2-approximation for \mlp13 as a
2-approximation for \mlp23. The former can be found more efficiently: by
Theorem~\ref{thm:2}, \apxe can be found in $O(n^4)$ time.

\begin{theorem}
  \label{thm:2tdr}
  A 2-approximation for \mlp23 in a terrain can be found in $O(n^4)$ time.
\end{theorem}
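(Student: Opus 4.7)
The plan is to simply invoke the machinery already set up, specifically Lemma~\ref{lem:e} combined with the generic approximation algorithm of Theorem~\ref{thm:2}, and argue that instantiating the latter with $a=1$ matches the claimed running time.

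First I would run the generic 2-approximation for \mlp13 on the input terrain. The features in that instantiation are the edges of the terrain, so the parameter space $f \times f'$ for a pair of features lives in $\R^2$. Plugging $a=1$ into the bound of Theorem~\ref{thm:2}, the total running time is $O(n^{2+\max(2,4\cdot 1-3+\eps)})=O(n^{2+2})=O(n^4)$. Let $\apxe$ denote the number of links in the path produced.

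Next I would observe that, while this algorithm approximates \mlp13, Lemma~\ref{lem:e} already tells us $\apxe \le 2\opt$, where \opt is the number of links in an optimal \mlp23 solution on the same terrain. In other words, although the path returned has bends restricted to terrain edges (so it is in particular a feasible \mlp23 path), it is within a factor $2$ of the best \mlp23 path that is allowed to bend anywhere on the faces. Therefore the same path serves as a valid 2-approximate solution for \mlp23, and it was constructed in $O(n^4)$ time.

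There is really no obstacle to address here; both ingredients (the running time and the approximation inequality) are already established. The only point worth noting explicitly is that a path feasible for \mlp13 is automatically feasible for \mlp23, since bends on edges are a special case of bends on faces, so no post-processing is required to turn \apxe into a valid \mlp23 path.
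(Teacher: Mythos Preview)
Your proposal is correct and follows essentially the same approach as the paper: invoke Theorem~\ref{thm:2} with $a=1$ to obtain \apxe in $O(n^4)$ time, then apply Lemma~\ref{lem:e} to conclude that this same path is a 2-approximation for \mlp23. The paper's own argument is exactly this two-line combination, and your additional remark that an \mlp13-feasible path is automatically \mlp23-feasible makes the logic fully explicit.
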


The running time of the algorithm in Theorem~\ref{thm:2tdr} is dominated by determining weak visibility between all $\binom{n}{2}$ pairs of edges; the approach from Section~\ref{sec:apxs3d} does it with brute force in $O(n^2)$ time per pair. An obvious question is whether this could be done faster for a single pair. We now show that this is hardly the case. We start from the analogous result for 2D polygonal domains:
\begin{theorem}
\label{thm:3sum2}
Determining weak visibility between a pair of edges in a polygonal domain with holes is 3SUM-hard.
\end{theorem}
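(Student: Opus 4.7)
The plan is to reduce from the integer 3SUM problem: given $a_1,\dots,a_n \in \mathbb{Z}$ of magnitude at most a polynomial $U$ in $n$, decide whether there exist indices $i,j,k$ with $a_i+a_j+a_k=0$. I will construct, in polynomial time, a polygonal domain with $O(n)$ holes containing two distinguished edges $e_1$ and $e_2$ such that $e_1$ and $e_2$ are weakly visible if and only if the 3SUM instance is feasible.

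Place $e_1$ along the horizontal segment $y=0$ and $e_2$ along $y=4$, each of sufficient $x$-extent. Between them install three horizontal ``fences'' (thin rectangular holes) at heights $y=1$, $y=2$, $y=3$. Pierce the bottom fence by point-like slits at $x=a_1,\dots,a_n$, the middle fence by slits at $x=-a_1/2,\dots,-a_n/2$, and the top fence by slits at $x=a_1,\dots,a_n$. Enclose everything in a bounding rectangle to obtain a polygonal domain with $O(n)$ holes and total complexity $O(n)$.

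For correctness, observe that a segment $pq$ with $p\in e_1$ and $q\in e_2$ avoids every fence iff it passes through exactly one slit on each. Because the three rows are equally spaced in $y$, three slits at $x$-positions $a_i$, $-a_j/2$, $a_k$ are collinear iff $-a_j/2 = (a_i+a_k)/2$, i.e., iff $a_i+a_j+a_k=0$. Conversely, any 3SUM solution immediately yields such a line of sight, since the wide edges $e_1,e_2$ accommodate any extension of the resulting line. Hence weak visibility of $e_1,e_2$ decides 3SUM in polynomial time, which is the theorem.

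The main delicate point is realising the ``point-like'' slits in a bona fide polygonal domain with holes. Idealised zero-width slits work immediately if degenerate (touching) holes are allowed; otherwise each slit should be given a tiny positive width $\delta$, and the analysis must check that this perturbation does not create spurious visibilities along non-3SUM triples. Since integer 3SUM mismatches are at least $1$ in magnitude while the horizontal slack at $y=2$ induced by $\delta$-wide slits at rows $y=1$ and $y=3$ is only $O(\delta)$, choosing $\delta$ a sufficiently small inverse polynomial in $U$ (encodable with polynomially many bits) suffices. I expect this perturbation bookkeeping to be the only obstacle; the rest follows the standard 3SUM-to-collinearity template.
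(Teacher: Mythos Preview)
Your proposal is correct and follows essentially the same three-fence construction as the paper: two long edges separated by three parallel barriers pierced by slits, with weak visibility equivalent to the existence of three collinear slits. The only difference is that the paper reduces from the already-known 3SUM-hard problem \textsc{GeomBase} (decide whether a point set on three parallel lines contains a non-trivial collinear triple) rather than from integer 3SUM directly; this lets the paper skip both the explicit slit-position arithmetic and the $\delta$-width perturbation discussion, while your version is more self-contained at the cost of that extra bookkeeping.
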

\begin{proof}
The proof is by picture; see Fig.~\ref{fig:3sum}.
\end{proof}

\begin{figure}
\centering
\includegraphics{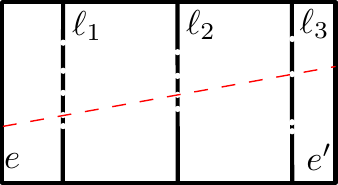}
\caption{Start from an instance of the 3SUM-hard problem GeomBase \cite{geombase}: Given a set $S$ of points lying on 3 parallel lines $\ell_1,\ell_2,\ell_3$, do there exist 3 points from $S$ lying on a line $\ell\notin\{\ell_1,\ell_2,\ell_3\}$? Construct an instance of the weak visibility problem for edges $e,e'$ in a polygonal domain: $\ell_1,\ell_2,\ell_3$ become obstacles and each point $p\in S$ is a gap punched in the obstacle; the lines are in a box whose two opposite edges (parallel to the lines) are the edges $e,e'$. The edges are weakly visible iff there exist 3 collinear gaps $p_i,i=1,2,3$, such that $p_i\in \ell_i$.}
\label{fig:3sum}
\end{figure}

The domain in Fig.~\ref{fig:3sum} can be turned into a terrain by erecting the lines $\ell_1,\ell_2,\ell_3$ into 3 vertical walls (the gaps in the lines become slits in the walls); similarly to the 2D case, the edges $e,e'$ weakly see each other iff GeomBase is feasible:
\begin{theorem}\label{thm:3sum3}
Determining weak visibility between a pair of edges in a terrain is 3SUM-hard.
\end{theorem}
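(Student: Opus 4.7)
The plan is to reduce from GeomBase exactly as in the proof of Theorem~\ref{thm:3sum2}, but to realize the planar construction of Fig.~\ref{fig:3sum} as a polyhedral terrain so that the visibility question between two terrain edges corresponds to the collinearity question on the three lines $\ell_1,\ell_2,\ell_3$. Given a GeomBase instance with points on $\ell_1,\ell_2,\ell_3$ inside a bounding box, I would take the flat ground to be (essentially) the box, and then \emph{erect} each of the three parallel lines into a thin vertical wall of the terrain; each point $p\in S$ on $\ell_j$ becomes a narrow slit of zero (or $\eps$-small) width cut out of the wall over $\ell_j$. The two opposite sides of the box parallel to $\ell_1,\ell_2,\ell_3$ are taken to be the designated terrain edges $e$ and $e'$ whose weak visibility we ask about.

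To make this a genuine terrain (a surface intersected only once by any vertical line) I would replace each ideal vertical wall by a very thin, very tall ``ridge'' with two steep almost-vertical sides meeting at a ridgeline directly above $\ell_j$; the slits become narrow V-shaped notches cut down through the ridge to ground level. By choosing the ridges tall enough relative to the base distance from $e$ and $e'$, any sight line from a point on $e$ to a point on $e'$ that does not pass through a notch of every ridge is blocked by the ridge's top; and by choosing the notches narrow enough and the ridges thin enough, a sight line that does pass through three notches is essentially a line that meets $\ell_1,\ell_2,\ell_3$ at three points of $S$. Thus $e$ and $e'$ are weakly visible on the terrain if and only if the GeomBase instance admits three collinear points, one on each $\ell_j$.

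The main obstacle is purely geometric bookkeeping: ensuring that the ``lift'' of Fig.~\ref{fig:3sum} is a bona fide terrain (no vertical face, no overhang) and that the slits-as-notches still transmit exactly the same set of admissible sight lines as the planar slits did, up to an arbitrarily small perturbation. This can be handled by making the wall thickness $\tau$, the notch opening $w$, and the inverse wall height $1/H$ all polynomially small in the coordinate magnitudes of the GeomBase instance, so that any spurious sight line (grazing a ridge top, or sneaking through the side of a notch instead of its bottom) is ruled out while the intended sight lines through three collinear gaps survive. Since GeomBase is 3SUM-hard~\cite{geombase} and the reduction is polynomial, the theorem follows.
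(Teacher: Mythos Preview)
Your proposal is correct and follows essentially the same approach as the paper: lift the planar construction of Theorem~\ref{thm:3sum2} by erecting the three lines $\ell_1,\ell_2,\ell_3$ into vertical walls (ridges) on a terrain with the gaps becoming slits, so that $e$ and $e'$ are weakly visible iff the GeomBase instance is feasible. The paper's proof is a one-sentence sketch of exactly this idea; your additional care about making the walls genuine non-vertical terrain features with controlled thickness, height, and notch width is a reasonable fleshing-out of details the paper leaves implicit.
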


The above 3SUM-hardness results are not the end of the story: the fact that determining weak visibility for a single pair of edges may require quadratic time does not imply that determining the visibility between all pairs of edges should require quatric time. In fact, the 3SUM-hardness of the 2D case (Theorem~\ref{thm:3sum2}) does not preclude existence of $O(n^2)$-time algorithm for finding \e{all pairs} of weakly visible edges in a polygonal domain with holes---such an algorithm is used, e.g., in Section~4 of \cite{diffuse}. Moreover, in \cite{alenex} it is shown that a data structure of $O(n^2)$ size can be built in $O(n^2)$ time, encoding visibility between \e{all pairs of points} in a domain; the data structure, which can be called the \e{global} visibility map of the domain, is an extension of the standard visibility graph that encodes visibility only between the domain's vertices. An immediate question is whether such a data structure can be built for terrains; below is our answer.

The \e{global visibility map} that encodes all mutually visible pairs of points on a terrain (or in a full 3D domain) will live in four dimensions---this is because a line in $\R^3$ has four degrees of freedom, and our data structure will use the projective dual $4D$ space $\Sp_{d}$ to the primary 3D space $\Sp_{p}$ where the terrain is located. A line $\ell\in\Sp_{p}$ will correspond to a point $\ell^{*}\in\Sp_{d}$. To build the global visibility map, consider a $5D$ space $\Sp_{5}$ where $\Sp_{p}$ and $\Sp_{d}$ are subspaces, and a point $O$ in $\Sp_{5}$ with coordinates $(0,0,0,0,1)$. The dual point $\ell^{*}\in\Sp_{d}$ for a line $\ell\in\Sp_{p}$ is constructed as follows: Draw a $4D$ hyperplane in $\Sp_{5}$ that goes through line $\ell$ and point $O$. A perpendicular line to such hyperplane that goes through $O$ intersects $\Sp_{d}$ in a point. This point will be $\ell^{*}$---the dual point to line $\ell$.

Now, the visibility map is a partition of $\Sp_{d}$ into cells, such that each cell contains points whose duals have the same combinatorial structure, i.e., they intersect the same set of obstacles' faces in $\Sp_{p}$.

\begin{lemma}\label{lem:mapUB}
The global visibility map that encodes all pairs of mutually visible points on terrain $T$ (or on a set of obstacles $\mathcal{O}$ in full 3D model) has complexity $O(n^4)$.
\end{lemma}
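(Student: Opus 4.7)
The plan is to realize the global visibility map as a coarsening of an arrangement of $O(n)$ algebraic hypersurfaces of constant description complexity in the $4$-dimensional dual space $\Sp_d$, and then to invoke the standard $O(n^d)$ bound on the complexity of such arrangements in $\R^d$.

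First, I would identify the combinatorial events that can change the ``visibility type'' of a line $\ell \in \Sp_p$ (i.e., the set of obstacle faces that $\ell$ crosses, together with the order in which it crosses them along $\ell$). As $\ell$ varies continuously, this data changes only when $\ell$ gains or loses an intersection with some face, which happens exactly when $\ell$ sweeps across the boundary of a face---that is, across an edge of $T$ (or of $\mathcal{O}$). For each edge $e$ of the input, let
\[
\Sigma_e := \{\,\ell^* \in \Sp_d : \ell \cap e \ne \emptyset\,\}
\]
be the dual locus of lines meeting $e$. Any two points of $\Sp_d$ that can be joined by a path avoiding $\bigcup_e \Sigma_e$ represent lines with the same combinatorial visibility structure, so every cell of the arrangement $\mathcal{A} = \mathcal{A}(\{\Sigma_e\}_e)$ is contained in a single cell of the visibility map. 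It therefore suffices to bound the complexity of $\mathcal{A}$.

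Next, I would verify that each $\Sigma_e$ is a semi-algebraic set of dimension $3$ inside $\Sp_d \cong \R^4$, cut out by a bounded number of polynomials of bounded degree. The condition that an unknown line in $\R^3$ hit a fixed line is a single algebraic constraint on its four degrees of freedom (in Pl\"ucker coordinates it is even linear); restricting the incidence to the segment $e$ rather than to its supporting line contributes only a constant number of additional low-degree inequalities. Because the dual map defined before the lemma is a rational transformation of constant degree, pulling these polynomials back to $\Sp_d$ still yields a constant number of low-degree polynomials. Since the input has $O(n)$ edges, we obtain $O(n)$ such surfaces $\Sigma_e$.

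Finally, I would invoke the standard complexity bound for arrangements of $n$ algebraic surfaces of constant description complexity in $\R^d$, which is $O(n^d)$~\cite{agarwal2000arrangements}; with $d=4$ this gives the desired $O(n^4)$ bound on $\mathcal{A}$, and hence on the global visibility map. The main technical subtlety to be checked is that $\Sigma_e$ truly has constant description complexity in the specific dual parameterization used in the paper, and that boundary and degeneracy cases (edge endpoints, coincident or crossing edges, transversal-line tangencies in which a candidate line is simultaneously incident to many edges) do not inflate the asymptotic bound; all of these are routine semi-algebraic bookkeeping and do not affect the $O(n^4)$ conclusion.
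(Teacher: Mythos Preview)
Your proposal is correct and follows essentially the same route as the paper: realize the visibility map as a coarsening of an arrangement of $O(n)$ constant-complexity hypersurfaces in the $4$-dimensional space of lines, one per edge of the input, and bound the arrangement by $O(n^4)$.

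There are two minor differences in execution worth noting. First, the paper works with the \emph{supporting lines} of the edges rather than the edges themselves; the dual locus of lines meeting a fixed line is then a genuine (quadric) hypersurface rather than a semi-algebraic surface with boundary, which lets the paper avoid the inequality bookkeeping you flag at the end. Second, instead of citing the general $O(n^d)$ arrangement bound as a black box, the paper gives an elementary count: there are $O(n^4)$ vertices because a vertex corresponds to a common transversal of four of the $n$ lines, and each $k$-cell is incident to only $O(1)$ $(k{+}1)$-cells (drop one of the $4-k$ incidence constraints), so the bound propagates upward in dimension. Your appeal to the standard arrangement-complexity theorem is arguably cleaner and sidesteps the need to justify that inductive step, at the cost of importing a heavier tool; the paper's argument is more self-contained but somewhat informal in its treatment of the higher-dimensional cell count. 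Either way the substance is the same.
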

\begin{proof}
Let $\Li$ be a set of $n$ lines in $\Sp_{p}$. $\Li$ implies a subdivision $W$ of space $\Sp_{d}$ into cells that correspond to lines that touch the same sets of lines in $\Li$. $W$ consists of $0$-cells (vertices), $1$-cells (edges), $2$-cells, $3$-cells, and $4$-cells. The $k$-cells of $W$ correspond to a set of lines that intersects exactly $4-k$ lines of $\Li$. There are clearly $O(n^4)$ $0$-cells, since there are $n$ lines in $\Li$. For each $k$-cell, the number of incident $(k+1)$-cells is $O(1)$, since they correspond to the sets of lines we get by dropping incidence to $1$ of the $4-k$ lines (and $4-k$ is constant). Therefore, the number of $k$-cells is also bounded by $O(n^4)$ for all $k$. Hence, $W$ has complexity $O(n^4)$.

Now, consider our terrain $T$ (or a set of obstacles $\mathcal{O}$ in full 3D model) in $\Sp_{p}$. We are interested in the subdivision $S$ of $\Sp_{d}$ into cells that correspond to line segments that are combinatorially equal (their end points are on the same features of $T$ or $\mathcal{O}$). Then, $W$ is a sub-subdivision of $S$ (in the sense of subgraph, so something with fewer components). Hence, $S$ also has complexity $O(n^4)$.
\end{proof}

\noindent{\bf Remark. } The first part of the above argument (the complexity of configuration space of lines among lines in $3$-space) is a natural question and it is well-studied. McKenna and O'Rourke~\cite{McKenna1988} argue quartic bounds on the numbers of $0$-faces, $1$-faces and $4$-faces (although many proofs in their paper are omitted). They also describe how to compute the complex consisting of all $0$-faces and $1$-faces in $O(n^4 \alpha(n))$ time.

\begin{figure}
\centering
\includegraphics[width=0.7\columnwidth]{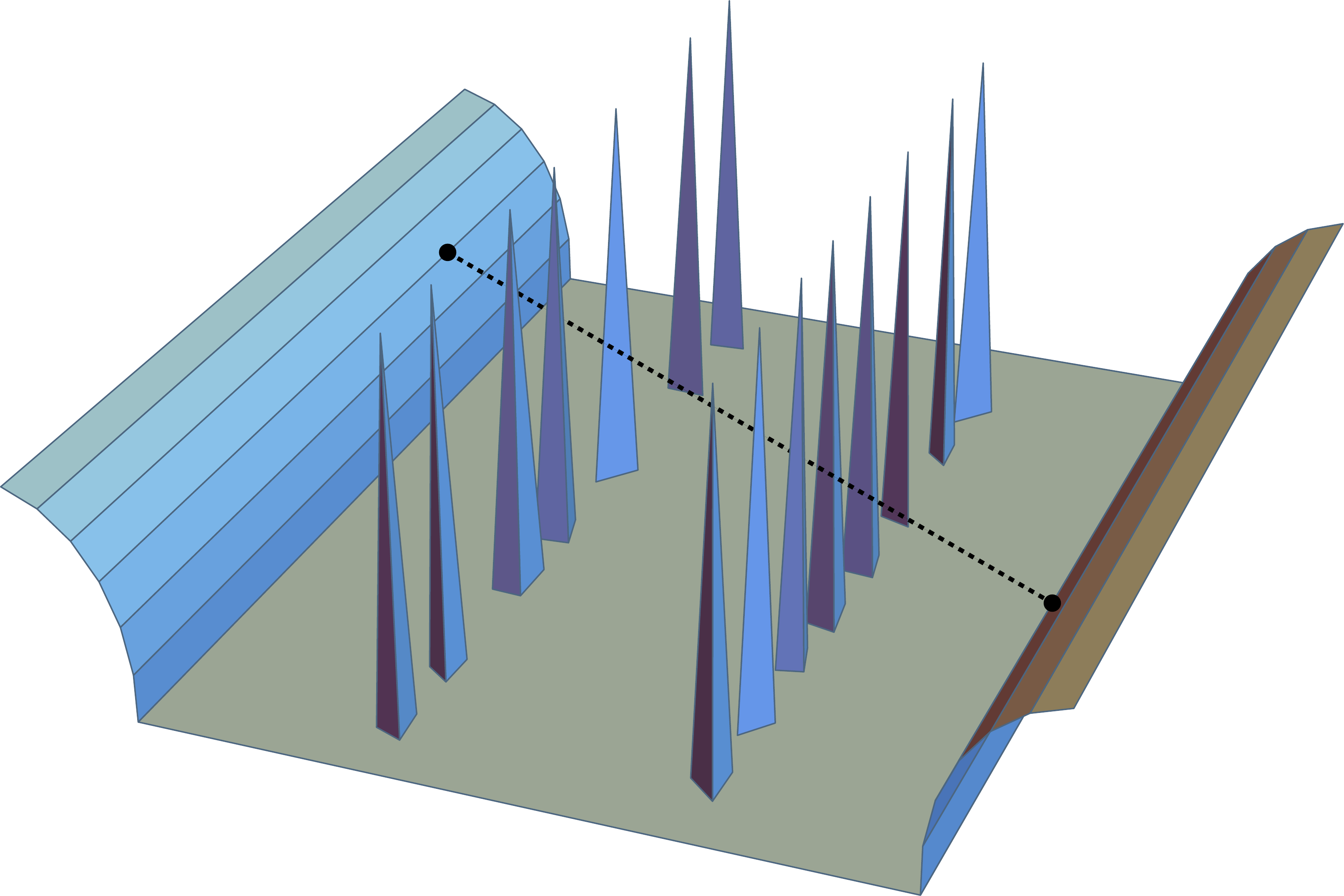}
\caption{Every vertex (0-face) in the visibility map corresponds to a line that crosses $4$ edges of the terrain. In this example, there is a line that connects any horizontal edge on the left-hand side with any horizontal segment on the right-hand side, and that also pins two spikes in the middle. Thus, there are $\Omega(n^{4})$ $0$-faces in the visibility map.}
\label{fig:cake}
\end{figure}

We now argue that the bound in Lemma~\ref{lem:mapUB} is tight: the global visibility map may have complexity $\Omega(n^4)$. (Other then possibly being an interesting result by itself,) this implies, in particular, that the running time of the algorithm in Theorem~\ref{thm:2tdr} may not be improved if one were to compute the weak visibility between all pairs of edges.
\begin{lemma}\label{lem:mapLB}
The global visibility map that encodes all pairs of mutually visible points on terrain $T$ can have complexity $\Omega(n^4)$.
\end{lemma}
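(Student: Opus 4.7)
The plan is to exhibit a terrain on $n$ vertices whose global visibility map contains $\Omega(n^4)$ distinct $0$-faces, following the schematic in Fig.~\ref{fig:cake}. Recall that each $0$-face of the map corresponds, via the duality of Lemma~\ref{lem:mapUB}, to a line in $\Sp_p$ that is tangent to (pins) four edges of the terrain; hence it suffices to construct a terrain containing four families of edges, each of size roughly $n/4$, such that every quadruple consisting of one edge from each family admits a distinct line transversal.

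Concretely, I would place two bundles of $n/4$ thin, nearly horizontal edges $L_1,\dots,L_{n/4}$ on the far left and $R_1,\dots,R_{n/4}$ on the far right of the terrain, at slightly different heights so that they act as a grid of candidate endpoints for lines shooting across the scene. Between them, I would erect two clusters of narrow upward spikes, $\{P_i\}$ and $\{Q_j\}$ of size $n/4$ each, positioned and tilted so that for each choice $(L_a, P_i, Q_j, R_b)$ there is a unique line in $\R^3$ incident to all four edges. The classical construction realizing this is to place $\{P_i\}$ and $\{Q_j\}$ on two skew lines in general position relative to $L$ and $R$: then for fixed $L_a, R_b$ the $1$-parameter family of lines meeting both $L_a$ and $R_b$ sweeps a doubly ruled surface that is intersected transversally by each spike in isolated points, yielding $\Theta(n^2)$ distinct transversals per pair. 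The main step is verifying general position so that each quadruple yields its own transversal, and so that this transversal actually pins (is tangent to) the four edges rather than crossing them transversally through their interiors, which is what certifies a $0$-face in the dual partition rather than a lower-dimensional cell.

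Next, I would confirm that the construction can be realized as a terrain: since all blocking features can be taken as thin vertical walls with horizontal upper edges, the entire scene fits beneath a single $xy$-monotone surface and can be triangulated into a terrain of $O(n)$ vertices without introducing additional transversals. The resulting $\Omega(n^4)$ pinning lines correspond to $\Omega(n^4)$ distinct combinatorial types of line segments in $\Sp_p$ (their endpoints lie on different features of $T$), hence to $\Omega(n^4)$ distinct $0$-faces in the subdivision $S$ of $\Sp_d$ described in Lemma~\ref{lem:mapUB}.

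The main obstacle I anticipate is the general-position argument: I must rule out degeneracies in which a single line pins more than four edges (which would collapse several quadruples into the same dual point) or in which perturbation effects of the terrain constraint (monotonicity in $z$) force some of the desired transversals to pass through the terrain surface and thus fail to witness visibility. Both issues are handled by choosing the heights, tilts, and spacings of the four families generically and by making the spikes sufficiently thin, after which a standard combinatorial counting of the $\Theta(n/4)^4 = \Omega(n^4)$ valid quadruples closes the argument and matches the upper bound of Lemma~\ref{lem:mapUB}.
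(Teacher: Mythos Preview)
Your proposal is essentially the same construction as the paper's: horizontal edge-bundles on the left and right, two groups of spikes in the middle, and $\Omega(n^4)$ lines each meeting one edge from each of the four families, giving $\Omega(n^4)$ distinct $0$-faces in the dual subdivision. The paper's own proof is just ``see Fig.~\ref{fig:cake}'' together with the figure caption describing exactly this; your write-up simply fills in the general-position and terrain-realizability details that the paper leaves implicit (one minor slip: where you write ``rather than a lower-dimensional cell'' you presumably mean higher-dimensional, since a $0$-face is already the lowest).
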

\begin{proof}
  See Fig.~\ref{fig:cake}. It is easy to see that this construction yields a
  visibility map of complexity $\Omega(n^4)$.
\end{proof}
%
%
Lemmas~\ref{lem:mapUB} and~\ref{lem:mapLB} give tight bounds on the complexity of the visibility map:
\begin{theorem}The complexity of global visibility map, encoding all pairs of mutually visible points on terrain (or on a set of obstacles in 3D) of complexity $n$, is $\Theta(n^4)$.\end{theorem}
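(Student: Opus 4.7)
The plan is simply to combine the two lemmas that have already been established in the preceding paragraphs. Lemma~\ref{lem:mapUB} gives the upper bound: the subdivision $S$ of the dual space $\Sp_d$ into cells of combinatorially equivalent segments is refined by the arrangement $W$ induced by the $n$ lines supporting the edges of the terrain (or of the obstacle set $\mathcal{O}$), and a cell-counting argument on $W$ (vertices correspond to quadruples of lines, and each $k$-cell has only $O(1)$ incident $(k{+}1)$-cells) yields $|W|=O(n^4)$, hence $|S|=O(n^4)$. Lemma~\ref{lem:mapLB} gives the matching lower bound via the construction in Fig.~\ref{fig:cake}, where two ``combs'' of $\Theta(n)$ parallel edges on the left and right, together with $\Theta(n)$ thin spikes in the middle arranged so that any pair of (left-edge, right-edge) line can be pinned against any pair of spikes, produce $\Omega(n^4)$ distinct $0$-faces in the visibility map.

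Thus the proof of the theorem is essentially one sentence: the upper and lower bounds match, so the complexity is $\Theta(n^4)$. The only ``work'' left is stylistic — stating it cleanly and pointing to the two lemmas.

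Concretely, I would write: ``By Lemma~\ref{lem:mapUB} the global visibility map has complexity $O(n^4)$, and by Lemma~\ref{lem:mapLB} there exist terrains on which the complexity is $\Omega(n^4)$. Combining the two bounds yields $\Theta(n^4)$, as claimed.''

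Since there is no substantive step beyond citing the two lemmas, there is no real obstacle. The interesting content has already been front-loaded into the lemmas; this theorem is purely a packaging statement, so the proof will be a one- or two-line reference and nothing more.
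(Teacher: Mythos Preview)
Your proposal is correct and matches the paper's approach exactly: the paper does not even include an explicit proof, but simply states the theorem after the sentence ``Lemmas~\ref{lem:mapUB} and~\ref{lem:mapLB} give tight bounds on the complexity of the visibility map,'' making it the packaging statement you describe. There is nothing to add.
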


\section{Lower bound constructions for \mlp03 on terrains}
\label{ap:ve}

\begin{figure}[t]
\centering
\includegraphics[width=0.6\columnwidth]{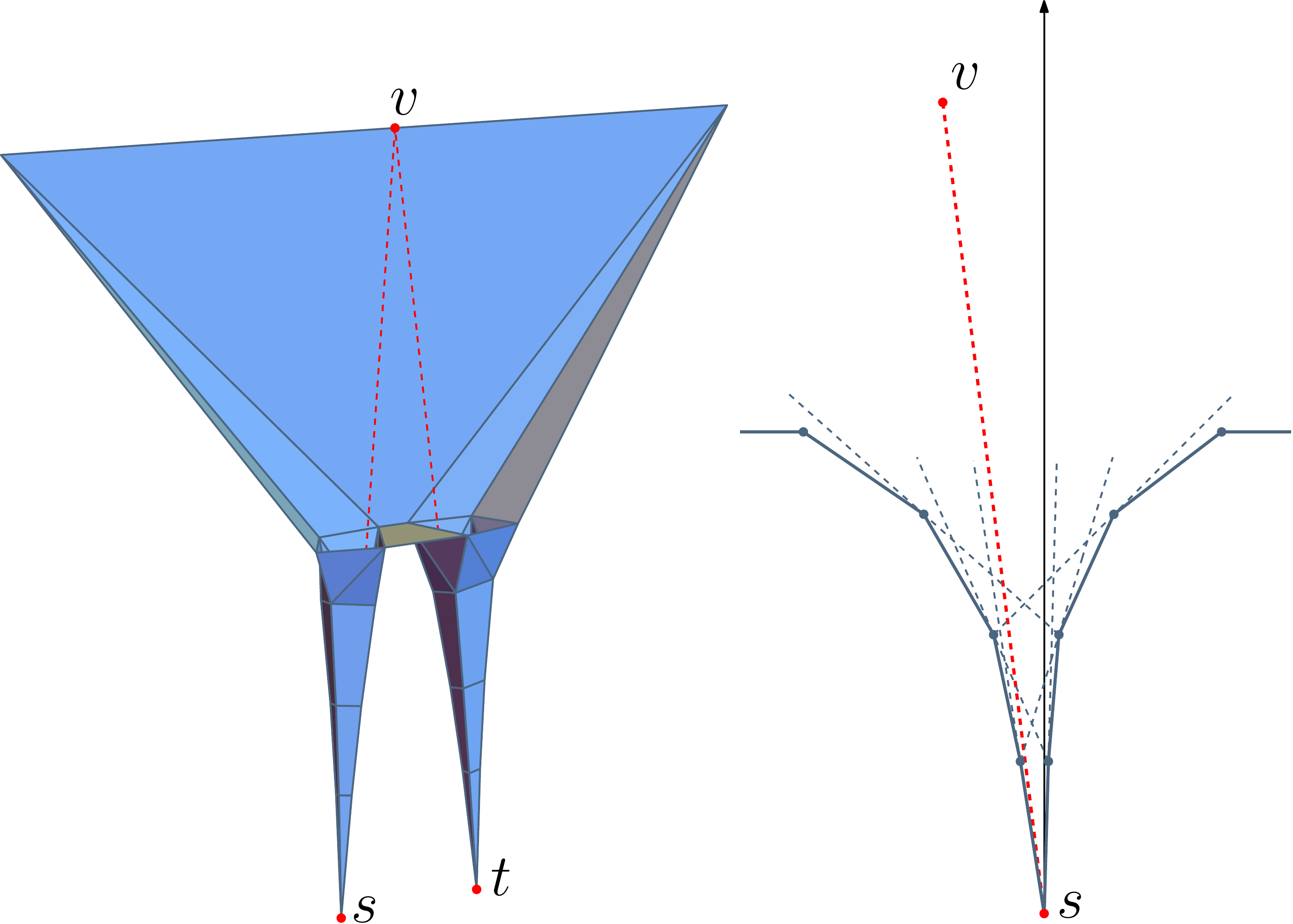}
\caption{On terrains, \mlp03 may be a factor $\Omega(n)$ worse than \mlp13. In the
  example shown, a solution to \mlp03 requires $\Omega(n)$ links, whereas \mlp13
  requires only two. Left: The 3D view on the trenches. Point $v$ is seen from
  both $s$ and $t$. Right: A cross-section of a trench. Vertices of each layer
  $i$ cannot see any other vertex further than the next layer.}
\label{fig:ve}
\end{figure}

Consider a terrain with two deep trenches whose vertices come in layers, see
for example Fig.~\ref{fig:ve}. Vertices in layer $i$ are connected to the
vertices of the previous layer, $i-1$, and vertices of the next layer,
$i+1$. Moreover, the only vertices visible from vertices in layer $i$ are the
ones in layers $i-1$ and $i+1$. Place $s$ and $t$ at the bottom of the
trenches. An $s\textrm-t$ path restricted to bend only at vertices of the
terrain will have a bend at each layer of the trenches, and therefore will have
$\Omega(n)$ links.

Now, place a tall steep face right outside the trenches, such that no vertices
of it can bee seen from any of the vertices inside the trenches, but some of
its interior and boundary edge $e$ can be seen from both $s$ and $t$. Then, a
solution to \mlp13 may bend in a point $v$ on $e$, and thus reach $t$ with two
links. Thus:

\begin{proposition}
  There is a terrain \T of $n$ vertices, and two vertices $s$ and $t$ on \T,
  such that a solution to \mlp03 requires $\Omega(n)$ times as much links as a
  solution to \mlp13.
\end{proposition}

\old{

\section{Min-link path through towers}
\label{sec:h}
\maarten {Remove this section for SoCG? It doesn't fit so well with the others.}
\frank{If we decide to keep it: rename the problem instances.}

In this section we address one issue with practical applicability of our main version, Terrain-\dr, by allowing the path bends to occur at a given elevation $h>0$ above the terrain; this models the possibility to install relays for wireless communication on height-$h$ towers (the vanilla variant of Terrain-\dr implicitly assumed $h=0$ which is not very practical). We denote the new problem by Terrain-\drs-$h$ (so Terrain-\drs = Terrain-\drs-0).

Say that edges $e,e'$ of the terrain are \e{$h$-visible} if they are weakly visible when both are shifted up by $h$, or equivalently, if there exist points $p\in e,p'\in e'$ such that the tips of height-$h$ towers installed at the points see each other (weakly visible edges are thus 0-visible). Determining the $h$-visibility for an arbitrary $h$ can be done in $O(n^2)$ time in the same way as determining the weak visibility (Lemma~\ref{lem:map}); the $h$-visibility graph can then be searched for shortest \s-\t path to get a path \apxe-$h$ that 2-approximates the min-link path whose bends stay at height $h$ over edges of the terrain. Finally, similarly to Terrain-\drs (Lemma~\ref{lem:e}), the path \apxe-$h$ is a 2-approximation also to the optimal path in Terrain-\drs-$h$ (in which the path is not restricted to bend over edges). It follows that Theorem~\ref{thm:2tdr} can be extended to arbitrary $h$:

\begin{theorem}
\label{thm:2tdrh}A 2-approximation for Terrain-\dr-$h$ can be found in $O(n^4)$ time.
\end{theorem}
Theorem~\ref{thm:2tdrh} allows us to find the approximate solution for a \e{given} $h$. One question of interest is how to compute the 2-approximations for \e{all} $h>0$. This may be important, e.g., when deciding what height towers to use: clearly, some heights are ``critical'' in that small change of the height leads to a change in the 2-approximate path computed by our algorithm, and it does not make sense to use towers of non-critical heights (for every non-critical height there exists a lower critical height giving path with the same number of links). The critical heights are a subset of those $h$ for which a pair of edges becomes $h$-visible (not all such heights are critical for the path: if the shortest path in the $h$-visibility graph does not use a newly created edge, then the $h$ is not critical; watching for changes in the graph's shortest path can be done by any semi-dynamic shortest path algorithm and is outside our scope). This motivates the \e{$h$-visibility problem}: Given two edges $e,e'$ of the terrain, find minimum $h$ for which they become $h$-visible. In the remainder of this section we give several results for the problem.

First of all, the 3SUM-hardness observations from the previous section (Theorem~\ref{thm:3sum3}) imply that it is 3SUM-hard to check whether the answer in the $h$-visibility problem is 0 or strictly positive:
\begin{corollary}\label{cor:hvis3sum}The $h$-visibility is 3SUM-hard, even to approximate to within an arbitrary factor.\end{corollary}

We now turn to (the more interesting) positive results. We can do binary search to approximate the answer to within an arbitrary factor; each step of the search is deciding whether a given $h$ suffices or not for the $h$-visibility:
\begin{proposition}A $(1+\eps)$-approximation to $h$-visibility can be found in $O(n^2\log\frac1\eps)$ time.\end{proposition}

Next, we show how to solve the $h$-visibility problem in $O(n^2 \log^2 n)$ time using parametric search~\cite{megiddo1983parametric}.

Let $\hat{e}$ and $\hat{e'}$ denote the edges $e$ and $e'$, lifted to some given height $h$. The (adapted) algorithm from Lemma~\ref{lem:map} allows us to test if $\hat{e}$ and $\hat{e'}$ are weakly visible from one another in $T_S = O(n^2)$ time. We now describe a parallel algorithm to test the weak visibility that runs in $T_P = O(\log n)$ time, using $P = O(n^2)$ processors. Plugging this into the parametric search framework will give us an algorithm with total running time $O(PT_P + T_PT_S \log P) = O(n^2 \log^2 n)$.

The global approach for our parallel algorithm is the same as that used in the sequential algorithm: we consider the parameter space $\hat e\times\hat{e'}$, in which each face of the terrain produces an obstacle of constant complexity. We build the arrangement of (the edges describing) these obstacles, and determine if there is a cell in the arrangement whose depth, that is, the number of obstacles containing it, is zero.

We use $O(n)$ processors---one for each face of the terrain---to compute the edges forming the obstacles in the parameter space $\hat{e} \times \hat{e'}$. We then compute the arrangement of all these edges in $O(\log n)$ time using $O(n^2)$ processors~\cite{anderson1996parallel}. The final step is to determine if there is a cell with depth zero. 
We do this as follows. We build a spanning tree on the dual graph of the arrangement, and construct an Euler tour $\E = c_1,..,c_k$ on this tree. Both these steps can be done in $O(\log n)$ time using $O(n^2)$ processors~\cite{shiloach1982parallelconnectivity,tarjan1985efficient}.

Let $d_1,..,d_k$ denote the depths of the cells in \E, let $\Delta_{ij} = d_j - d_i$ be the difference in depth between $c_j$ and $c_i$, and let $m_{ij}$ be the minimum depth of a cell in $c_i,..,c_j$, relative to the depth of $c_i$. That is, if $c_\ell$ is the cell with minimum depth in $c_i,..,c_j$ then $d_\ell = d_i + m_{ij}$. Clearly, $\Delta_{ii} = m_{ii} = 0$. We can easily compute $\Delta_{ij}$ and $m_{ij}$ from $\Delta_{i\ell}$, $\Delta_{(\ell+1)j}$, $m_{i\ell}$, and $m_{(\ell+1)j}$, for $i \leq \ell < j$, since going from $c_{\ell}$ to $c_{\ell+1}$ the depth changes by exactly one. Consider a balanced binary tree on \E. Each node $v_{ij}$ in this tree represents a subpath $c_i,..,c_j$. Using the procedure above in a bottom up fashion we can compute the values $m_{ij}$ and $\Delta_{ij}$ for all nodes in the tree in $O(\log n)$ time using $O(n^2)$ processors. The value $m_{1k}$ associated with the root of this tree, together with the depth at $c_1$ then gives us the global minimum depth of the tree. We can thus decide if the minimum is zero. So, all that remains is to compute the depth at $c_1$; this can easily be done in $O(\log n)$ time using $O(n)$ processors. We choose an arbitrary point in $c_1$. This point represents a line segment $s$ between a pair of points on $\hat{e}$ and $\hat{f}$. We then use one processor for each face of the terrain to test if it intersects $s$ in $O(1)$ time. Counting how many faces are intersected can again be done in $O(\log n)$ time using $O(n)$ processors.

The above procedure allows us to compute if $\hat{e}$ and $\hat{e'}$ are weakly visible. Plugging this into the parametric search framework gives us the following result.

\begin{theorem}
The $h$-visibility problem can be solved in $O(n^2\log^2 n)$ time.
\end{theorem}

We use the above procedure to compute the Pareto envelope of the (height, number of link) pairs; the envelope is the set of pairs $(h,l)$ such any $l$-link $s\textrm-t$ path must use towers of height at least $h$, and a path through towers of height $h$ must have at least $l$ links (i.e., one cannot simultaneously decrease both the tower height and the link distance). More specifically, in $O(n^4\log^2 n)$ time we compute the minimum heights at which all pairs of edges become weakly visible. We sort these $O(n^2)$ values $\{h_0,\dots,h_m\}$ by increasing height. Next, we build the weak-visibility graph $G_0$ for $h=h_0$. For each height $h_{i+1}$ (in sorted order) we can obtain the visibility graph $G_{i+1}$ from $G_i$ by inserting at most one edge into $G_i$. So, for each height $h_i$ we can compute the length $\ell_i$ of a shortest \s-\t path in graph $G_i$. This gives us $O(n^2)$ points $(h_i,\ell_i)$, on which we can compute the points on the Pareto envelope in $O(n \log n)$ time (or even $O(n)$ time, since the points can be reported by increasing height). The total running time for building the graphs is $O(n^2)$. Computing the shortest paths takes $O(n^4)$ time in total. This is dominated by the time it takes to compute the minimum heights. We conclude:
\begin{theorem}
We can compute a 2-approximation for Terrain-\drs-$h$ for all $O(n^2)$
critical heights in $O(n^4\log^2 n)$ time.
\end{theorem}

}

\section{Conclusion}We considered minimum-link paths in 3D, showing that most of the versions of the problem are hard but admit PTASes; we also obtained similar results for the diffuse reflection problem in 2D polygonal domains with holes. The biggest remaining open problem is whether pseudopolynomial-time algorithms are possible for the problems: our reductions are from 2-PARTITION, and hence do not show strong hardness. A related question is exploring bit complexity of the min-link paths in 3D (note that already in 2D simple polygons finding min-link path with integer vertices is weakly NP-hard \cite{ding}).

\paragraph*{Acknowledgments}
We thank Joe Mitchell and Jean Cardinal for
fruitful discussions on this work and the anonymous reviewers for their helpful comments.  M.L. and I.K. are supported by the
Netherlands Organisation for Scientific Research (NWO) under grants 639.021.123
and 639.023.208 respectively. V.P. is supported by grant 2014-03476 from the
Sweden's innovation agency VINNOVA. F.S. is supported by the Danish National
Research Foundation under grant nr.~DNRF84.

\frank{we have quite a few references, that take up quite a bit of space. Maybe
we can get rid of some of the less-related ones.}

\printbibliography

\old{
\clearpage
\appendix

\section{Lower bound constructions for \mlp03 on terrains}
\label{ap:ve}

Consider a terrain with two deep trenches whose vertices come in layers, see
for example Fig.~\ref{fig:ve}. Vertices in layer $i$ are connected to the
vertices of the previous layer, $i-1$, and vertices of the next layer,
$i+1$. Moreover, the only vertices visible from vertices in layer $i$ are the
ones in layers $i-1$ and $i+1$. Place $s$ and $t$ at the bottom of the
trenches. An $s\textrm-t$ path restricted to bend only at vertices of the
terrain will have a bend at each layer of the trenches, and therefore will have
$\Omega(n)$ links.

\begin{figure}[b!]
\centering
\includegraphics[width=0.6\columnwidth]{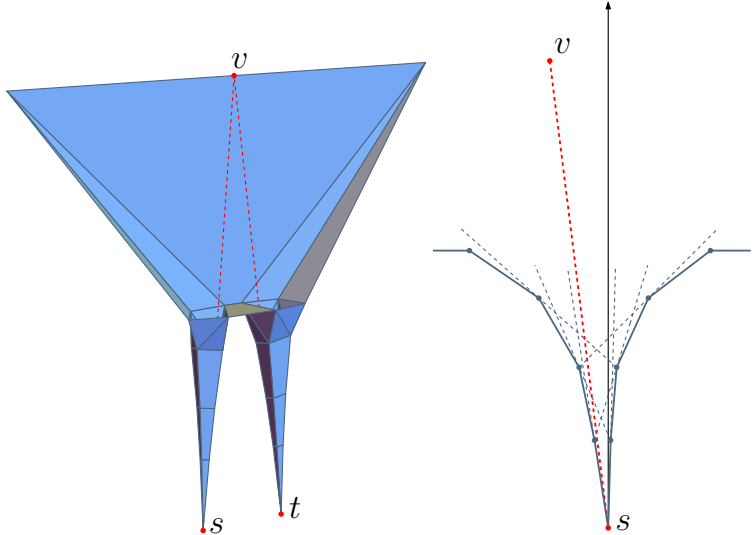}
\caption{On terrains, \mlp03 may be a factor $\Omega(n)$ worse than \mlp13. In the
  example shown, a solution to \mlp03 requires $\Omega(n)$ links, whereas \mlp13
  requires only two. Left: The 3D view on the trenches. Point $v$ is seen from
  both $s$ and $t$. Right: A cross-section of a trench. Vertices of each layer
  $i$ cannot see any other vertex further than the next layer.}
\label{fig:ve}
\end{figure}

Now, place a tall steep face right outside the trenches, such that no vertices
of it can bee seen from any of the vertices inside the trenches, but some of
its interior and boundary edge $e$ can be seen from both $s$ and $t$. Then, a
solution to \mlp13 may bend in a point $v$ on $e$, and thus reach $t$ with two
links. Thus:

\begin{proposition}
  There is a terrain \T of $n$ vertices, and two vertices $s$ and $t$ on \T,
  such that a solution to \mlp03 requires $\Omega(n)$ times as much links as a
  solution to \mlp13.
\end{proposition}

}

\end{document}

Do hardness proof with all details, e.g. start with a figure of a creased rectangle.
\paragraph{Features}
The only requirement is that all points on a feature see each other. The requirement is satisfied, e.g,,if the features are convex,.In particular, natural choices for the features are the tetrahedra in a tetrahedralization of a 3D domain in the free f, triangles in a triangulation of the terrain in the crawler version, edges in the-e version and in the 2d diffreffl.